\newtheorem{theorem}{Theorem}
\newtheorem{proposition}[theorem]{Proposition}
\renewcommand{\square}{\lozenge}
\newcommand{\tablefontsize}{\footnotesize}
\newcommand{\defbox}[2]{
\vspace{1mm}
\noindent\fbox{
\begin{minipage}{0.96\textwidth}
{\bf #1}
#2
\end{minipage}
}
\vspace{1mm}
}
\title{Simple and efficient four-cycle counting on sparse graphs}
\author{Paul Burkhardt\footnote{Research Directorate, National Security Agency,
Fort~Meade, MD 20755. Email: pburkha@nsa.gov}~~and
David G. Harris\footnote{University of Maryland, Department of Computer
    Science, College Park, MD 20742. Email: davidgharris29@gmail.com}}
\date{\today}
\begin{document}

\maketitle

\begin{abstract}
  We consider the problem of counting 4-cycles ($C_4$) in an undirected graph
  $G$ of $n$ vertices and $m$ edges (in bipartite graphs, 4-cycles are also
  often referred to as \emph{butterflies}). Most recently, Wang et al. (2019,
  2022) developed algorithms for this problem based on hash tables and sorting
  the graph by degree. Their algorithm takes $O(m\bar\delta)$ expected time and
  $O(m)$ space, where $\bar \delta \leq O(\sqrt{m})$ is the \emph{average
  degeneracy} parameter introduced by Burkhardt, Faber \& Harris (2020). We
  develop a streamlined version of this algorithm requiring $O(m\bar\delta)$
  time and precisely $n$ words of space. It has several practical improvements
  and optimizations; for example, it is fully deterministic, does not require
  any auxiliary storage or sorting of the input graph, and uses only addition
  and array access in its inner loops.
  
  Our algorithm is very simple and easily adapted to count 4-cycles incident to
  each vertex and edge. Empirical tests demonstrate that our array-based
  approach is $4\times$ -- $7\times$ faster on average compared to popular hash
  table implementations.

  \textit{Keywords}: graph, 4-cycle, four-cycle, counting, butterfly
\end{abstract}

\section{\label{sec:introduction}Introduction}
Let $G = (V,E)$ be an undirected graph, with $n = |V|$ vertices and $m = |E|$
edges. A $k$-cycle in $G$, denoted $C_k$, is a sequence of $k$ distinct adjacent
edges that begins and ends with the same vertex without repeating
vertices. Cycles are fundamental building blocks to larger graph structures and
often denote important properties. The $C_3$ (i.e. triangle) in particular has
been well-studied for its role in connectivity, clustering, and
centrality~\cite{bib:watts_strogatz1998, bib:cohen2009, bib:friggeri2011,
  bib:burkhardt2021b}. Today there is interest in analyzing fundamental 4-node
subgraph patterns in large, sparse graphs. (We say a graph is \emph{sparse} if
it has $m=o(n^2)$ edges.)

The complexity of graph algorithms is often measured in terms of $m$ and
$n$. For example, a straightforward triangle-counting algorithm may have
$O(m^{3/2})$ runtime. Such super-linear bounds are impractical on large-scale
graphs. More carefully designed algorithms take advantage of the graph structure
and thereby require a finer measure of the runtime, often in terms of
``sparsity'' parameters. One such parameter is the \emph{core number} $\kappa$,
which is the largest minimum degree of any subgraph. This is closely related to
the \emph{arboricity} $a$, which is the minimum number of forests into which the
edges can be partitioned. An alternative measure related to the arboricity is a
new parameter called the \emph{average degeneracy}, introduced in 2020 by
Burkhardt, Faber and Harris~\cite{bib:burkhardt2020}. It is particularly
powerful and defined as
\[
\bar\delta = \frac{1}{m}\sum_{uv \in E} \min\{d(u),d(v)\},
\]
where $d(x)$ is the degree of any vertex $x$. It is known that
$\bar\delta\leq 2a$ and $a \le \kappa< 2a\le O(\sqrt{m})$, and
$\bar\delta$ can be much smaller than $a$ or $\kappa$. In our
experiments on real-world graphs we have found that on average $\bar\delta$
is about $3\times$ smaller than $\kappa$, and on some graphs it was $6\times$
smaller.

There has been interest in counting 4-cycles in bipartite graphs as these can
suggest similarity or closeness between vertices~\cite{bib:sanei2018,
  bib:wang2019, bib:zhou2021, bib:wang2022}. In this context, a $C_4$ is often
called a \emph{butterfly.} For example, a $C_4$ in a user-item bipartite graph
represents groups of users that select the same subset of items, and may
therefore have similar interests. Recently, Wang et al.~\cite{bib:wang2019,
  bib:wang2022} developed an algorithm with $O(m\bar\delta)$ expected runtime
and $O(m)$ space based on sorting vertices. The algorithm of~\cite{bib:wang2019,
  bib:wang2022} was designed for bipartite graphs, but it works for general
sparse graphs. This is the current state of the art for efficient $C_4$
counting.

In this paper, we present a streamlined version of the \cite{bib:wang2022}
algorithm for $C_4$-counting, based on simple arrays with a number of key
optimizations. These include some general algorithmic tricks which may not be
obvious to non-experts. Other optimizations involve some time-memory tradeoffs
where empirical evidence is needed to choose the best implementation. Our main
improvements to the state-of-the-art are (i) asymptotically less space and (ii)
deterministic runtime. Our experiments also show that using an array is
$4\times$ -- $7\times$ faster than using popular hash tables. We summarize our
results as follows.

\begin{theorem}
  \label{thm:c4}
  Counting 4-cycles in a graph $G$ takes $O(m\bar\delta)$ time and $n$ space.
\end{theorem}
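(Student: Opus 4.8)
The plan is to combine a degeneracy‑style edge orientation with a single auxiliary counter array, in the spirit of the Chiba--Nishizeki listing technique, but replacing the degree‑sorting and vertex‑deletion steps by on‑the‑fly comparisons. First compute the degree array $d(\cdot)$ in $O(m+n)$ time and define the total order $v\prec w$ to hold iff $d(v)<d(w)$, or $d(v)=d(w)$ and $\mathrm{id}(v)<\mathrm{id}(w)$; note that for an edge $\{v,w\}$ with $v\prec w$ we have $\min\{d(v),d(w)\}=d(v)$. Keep an array $\mathrm{cnt}[\cdot]$ over the vertices, initialised to $0$, and a running total $L$. The algorithm iterates over every vertex $v$; for each neighbour $u$ of $v$ with $u\prec v$, and then for each neighbour $w$ of $u$ with $w\prec v$, it executes $L\gets L+\mathrm{cnt}[w]$ followed by $\mathrm{cnt}[w]\gets\mathrm{cnt}[w]+1$; after $v$ is finished it reruns the same double loop to reset every touched entry of $\mathrm{cnt}$ back to $0$. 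The output is $L$.

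For correctness I would argue as follows. For a fixed $v$, the value of $\mathrm{cnt}[w_0]$ reached while processing $v$ equals $\lvert\{u\in N(v)\cap N(w_0):u\prec v\}\rvert$ for each $w_0\prec v$, and the ``add‑then‑increment'' update makes the total contribution of $v$ to $L$ exactly $\sum_{w_0\prec v}\binom{\mathrm{cnt}[w_0]}{2}$ — that is, the number of unordered pairs $\{u,u'\}$ with $u,u'\in N(v)\cap N(w_0)$ and $u,u'\prec v$. Any such pair together with $v$ and $w_0$ forms a $4$-cycle on $v,u,w_0,u'$ (four distinct vertices, since $w_0\prec v$, the graph is simple, and $u\ne u'$) in which $v$ is the $\prec$-largest vertex and $\{v,w_0\}$ is the diagonal through $v$. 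Conversely, every $4$-cycle $H$ has a unique $\prec$-largest vertex $v^\ast$; it lies in exactly one diagonal $\{v^\ast,w_0\}$, so $w_0\prec v^\ast$, and its two cycle‑neighbours in $H$ are precisely the other two vertices, which lie in $N(v^\ast)\cap N(w_0)$ and are $\prec v^\ast$. Hence $H$ is counted exactly once — in the term $\binom{\mathrm{cnt}_{v^\ast}[w_0]}{2}$ — and therefore $L=\#C_4$.

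For the running time, processing $v$ costs $O\bigl(d(v)+\sum_{u\in N(v):\,u\prec v}d(u)\bigr)$, and the reset pass only doubles this. Summing over all $v$ gives $O\bigl(m+\sum_{\{u,v\}\in E:\,u\prec v}d(u)\bigr)=O\bigl(m+\sum_{\{u,v\}\in E}\min\{d(u),d(v)\}\bigr)=O\bigl(m+m\bar\delta(G)\bigr)=O(m\bar\delta(G))$, where the middle equality is the one quantitative step — each edge, charged through its $\prec$-smaller endpoint, contributes exactly its minimum degree — and the last step uses $\bar\delta(G)\ge 1$ (the $O(m+n)$ preprocessing, and isolated vertices if any, are absorbed into this or handled separately). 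For space, apart from the input adjacency lists we store only the degree array, the array $\mathrm{cnt}$, and $L$, i.e. $O(n)$; resetting $\mathrm{cnt}$ by rerunning the loop avoids any auxiliary stack, the order $\prec$ is never materialised as a sorted list, and the inner loop uses only array indexing, integer additions, and $\prec$-comparisons — no hashing and no binomial‑coefficient evaluation.

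I do not expect a single genuinely hard step; the most error‑prone part is the correctness bookkeeping — verifying that the two constraints $u\prec v$ and $w\prec v$ together keep each $4$-cycle reachable from its $\prec$-maximum vertex while blocking it at every other vertex (regardless of how the two diagonals interleave the $\prec$-order of the four vertices), and that the incremental evaluation of $\binom{\mathrm{cnt}[w]}{2}$ via ``$L\mathrel{+}=\mathrm{cnt}[w]$ before the increment'' is exact even when updates to different entries of $\mathrm{cnt}$ are interleaved. The running‑time identity relating the loop cost to $\bar\delta(G)$ is short once the orientation is in place, and the $O(n)$-space reset is routine.
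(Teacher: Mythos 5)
Your proposal is correct and is essentially the paper's own algorithm and analysis: the same two-pass scheme with a size-$n$ counter array, the same ``add-then-increment'' trick implicitly computing $\binom{\ell}{2}$, counting each 4-cycle at its $\prec$-largest vertex via its antipodal (diagonal) vertex, and the same runtime bound $\sum_{uv \in E,\, u \prec v} d(u) \leq \sum_{uv \in E} \min\{d(u),d(v)\} = m\bar\delta(G)$. No substantive differences from the paper's proof of Propositions~\ref{clm:arbor} and~\ref{prop:correct}.
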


\begin{theorem}
  \label{thm:c4vertex}
  Counting 4-cycles incident to each vertex in a graph $G$ takes
  $O(m\bar\delta)$ time and $O(n)$ space.
\end{theorem}

\begin{theorem}
  \label{thm:c4edge}
  Counting 4-cycles incident to each edge in a graph $G$ takes $O(m\bar\delta)$
  time and $O(m+n)$ space.
\end{theorem}

\begin{theorem}
  \label{thm:c4list}
  Enumerating 4-cycles in a graph $G$ takes $O(m\bar\delta + \#C_4)$ time and
  $O(n)$ space.
\end{theorem}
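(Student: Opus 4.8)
The natural starting point is the counting routine behind Theorem~\ref{thm:c4}: fix any strict total order $\prec$ on $V$ that refines ``order by degree, ties broken by vertex index'' (this needs only $O(1)$ comparisons, no sorting), and for each vertex $v$ walk $v \to u \to w$ over all $u \in N(v)$ and $w \in N(u)$ with $v \prec u$ and $v \prec w$, accumulating a counter array $L[w]$. After the walk at $v$, $L[w]$ equals the number of common neighbors of $v$ and $w$ that follow $v$ in $\prec$, so $\sum_w \binom{L[w]}{2}$ is exactly the number of $4$-cycles whose $\prec$-minimum vertex is $v$; since every $4$-cycle has a unique $\prec$-minimum vertex, summing over $v$ counts each $4$-cycle once. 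The work at $v$ is $O\bigl(\sum_{u \in N(v),\, v\prec u} d(u)\bigr)$, and these terms telescope to $O(m\bar\delta)$; the only non-input storage is $L$ and a list of touched vertices, giving $O(n)$ space.

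To enumerate, I would keep this skeleton but, for each \emph{relevant} distance-two vertex $w$ (those with $L[w] \ge 2$), actually produce the set $S_w$ of witnessing common neighbors that follow $v$ and output the $\binom{|S_w|}{2}$ cycles $v\,u\,w\,u'\,v$ with $\{u,u'\} \subseteq S_w$. Note each $4$-cycle is emitted exactly once (at its $\prec$-minimum $v$, with $w$ the unique vertex of the cycle opposite $v$, and $\{u,u'\}$ the remaining pair). The difficulty is purely about space: materializing every $S_w$ at once costs $\Theta\bigl(\sum_w L[w]\bigr)$, i.e.\ the work at $v$, which may be $\omega(n)$. I would therefore run \emph{two passes} per vertex $v$. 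The first pass is the counting walk and yields $L[\cdot]$ together with the touched list $T$ (of size at most $n$, since it is a set of vertices). The second pass re-walks $v \to u \to w$ and, for each relevant $w$, appends $u$ to a temporary list for $w$; because the target size $L[w]$ is already known, a list can be \emph{flushed} — its $\binom{L[w]}{2}$ cycles emitted and its storage reclaimed — the instant it is complete. For the relevant $w$ whose lists are slow to fill (in particular those with $d(w)$ small relative to $L[w]$), I would instead recover $S_w$ directly by scanning $N(w)$ and keeping those neighbors that lie in $N(v)$ and follow $v$, using a bitmap of $N(v)$ prepared once at the start of the iteration; this costs $O(d(w))$.

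The main obstacle — and the step that requires a genuine argument rather than bookkeeping — is to choose, per $w$, between ``grow-and-flush'' and ``rescan $N(w)$'' so that simultaneously (i) the live scratch space stays $O(n)$ and (ii) the total reconstruction overhead is $O(\#C_4)$. The time side comes from a charging argument: producing $S_w$ costs either $O(|S_w|)$ or $O(d(w))$, and using $\binom{|S_w|}{2} \ge \tfrac12 |S_w|$ one shows this is absorbed into the $O(m\bar\delta)$ walk budget or into the four-cycles that $w$ contributes at $v$ — but this only works if $w$ is rescanned when $d(w)$ is comparable to the number of cycles it yields and stored otherwise, and one must then verify that the stored lists never hold more than $O(n)$ entries at one time. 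Reconciling these two accountings (avoiding ``wasted'' scans of high-degree vertices $w$ with only a few witnesses, while never paying $\omega(n)$ space on a single iteration) is the crux. Once that lemma is established, the theorem follows by summation: total time $\sum_v O\bigl(\mathrm{work}(v) + \#C_4(v)\bigr) = O(m\bar\delta + \#C_4)$, and peak space $O(n)$ — the arrays $L$ and the bitmap of $N(v)$, the touched list $T$, and the $O(n)$ entries of live witness lists.
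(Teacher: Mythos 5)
There are two genuine problems with your plan. First, your skeleton has the orientation of $\prec$ reversed, and this breaks the time bound before the enumeration question even arises. You anchor each cycle at its $\prec$-\emph{minimum} vertex $v$ and, for every $u\in N(v)$ with $v\prec u$, scan all of $N(u)$. The cost charged to an edge $\{v,u\}$ is then $d(u)$ for its $\prec$-\emph{larger} endpoint, so the total is $\sum_{\{u,v\}\in E}\max\{d(u),d(v)\}$, not $\sum_{\{u,v\}\in E}\min\{d(u),d(v)\}=m\bar\delta(G)$; the claimed "telescoping" is false. A star on $n$ vertices is a counterexample: $\#C_4=0$ and $m\bar\delta(G)=\Theta(n)$, yet every leaf scans the center's full neighborhood, for $\Theta(n^2)$ work. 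The paper's Algorithm~\ref{alg:c4} does the opposite: it anchors at the $\prec$-\emph{largest} vertex $v$ and scans $N(u)$ only for neighbors $u\prec v$, so each inner scan is over the lower-degree endpoint and the per-edge cost is $\min\{d(u),d(v)\}$. Your correctness argument (unique extremal vertex per cycle) survives the flip, but your runtime analysis does not work as written.

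Second, the part you yourself flag as the crux --- choosing per $w$ between "grow-and-flush" and "rescan $N(w)$" so that live scratch space stays $O(n)$ while the reconstruction overhead is $O(\#C_4)$ --- is never established, so the proof is incomplete exactly where it needs an argument. The paper takes a much simpler route (Algorithm~\ref{alg:c4list}): keep, for each $y\prec v$, a list $L(y)$ of the common neighbors of $v$ and $y$ found so far, and when a new $u$ is found, immediately output $(v,u,y,x)$ for every $x$ already in $L(y)$ before appending $u$. Each emitted tuple is a distinct 4-cycle, so the output loop contributes exactly $O(\#C_4)$, and all remaining work and the clearing pass are identical to Algorithm~\ref{alg:c4}; no pairing-up of materialized sets, no second reconstruction pass, and no case distinction is needed. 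To your credit, the space worry you raise is not imaginary: at a single iteration the lists of Algorithm~\ref{alg:c4list} can hold as many entries as there are wedges through $v$, which can exceed $n$ (it is bounded by $n$ plus twice the number of cycles output in that iteration), a point the paper's own proof passes over silently. But your proposed remedy neither proves the balancing lemma it depends on nor repairs the underlying runtime accounting, so as it stands the proposal does not establish Theorem~\ref{thm:c4list}.
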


We emphasize that $G$ can be provided with an \emph{arbitrary} ordering of the
neighbors of each vertex. These asymptotic bounds are similar to the previous
algorithms but there are two main differences. First, they are completely
deterministic; in particular, the algorithms avoid the use of hash
tables. Second, the space complexity is just linear in the vertex count
(ignoring the cost of storing the input graph). To the best of our knowledge,
all previous efficient algorithms for $C_4$ counting have required $\Omega(m)$
additional space.

These differences may seem minor but have important practical consequences. Let
us summarize some other key reasons these differences can be significant.

\begin{description}
\item[Determinism vs. randomness:] The hash table is a powerful, versatile data
  structure which is ubiquitous in algorithm design. It has essentially optimal
  memory usage, linear in the size of its data, as well as constant access time
  on average. However, in addition to theoretical issues about the role of
  randomness, it has its downsides in practice. First, it is relatively costly
  to implement, requiring a good hash function and code to handle
  collisions. Hash tables using chaining are not cache friendly, and those based
  on probing require careful maintenance of the load factor. Many programming
  languages provide hash table implementations for convenience but are of
  varying quality. We discuss this issue further in
  Section~\ref{sec:experiment}, where we demonstrate empirically that hash
  tables may be an order of magnitude slower than direct table access.
  
  In addition, like all randomized algorithms, hash table operations can have
  unpredictable or unreproducible behavior. More seriously, it can open the door
  to catastrophically bad performance on adversarial
  inputs~\cite{bib:crosby2003denial}. This can be partially mitigated by using
  stronger hash functions, e.g.~\cite{bib:bernstein}. By using a deterministic
  algorithm with \emph{worst-case} guarantees, we eliminate all such issues at a
  stroke.

\item[Cost of graph storage:] In some applications, we may wish to run our
  algorithm on very large inputs using a single machine~\cite{bib:park2021,
    bib:mosaic2017, bib:garaph2017, bib:khaouid2015, bib:graphchi2012}. The
  graph may require hundreds of gigabytes to store, and we can afford to keep at
  most one copy in memory. (By contrast, the number of \emph{vertices} is
  typically more modest, requiring only megabytes of storage.) Any storage
  overhead of $\omega(m)$ is out of the question. Storage overheads of
  $\Omega(m)$ are painful and best avoided as much as possible. Note that many
  previous $C_4$-counting algorithms may have hidden $\Omega(m)$ costs for
  scratch-space to sort the input graph by vertex degree.

\item[Read-only graph access:] Because of the large size of the input graph, it
  is often desired to use ``graph packages,'' which are responsible for
  maintaining and updating the graph and communicating it across a network; for
  example, see~\cite{bib:stinger}. It may be running multiple graph algorithms
  simultaneously (e.g. 4-cycle counting, $k$-core extraction, 3-cycle
  counting). In these cases, the individual graph algorithms should not modify
  the graph storage itself.
\end{description}

There is one other algorithmic improvement worth mentioning. In our algorithms,
we do not explicitly compute binomial coefficients. Instead, we update the
counts in a specific order leading to a sum equating to the binomial
coefficient. In our experiments we found this slightly improved the running
time. This also eliminates numerical complications from multiplication, which
could potentially overflow or lose precision. Notably, this technique would also
simplify and improve the current hash-based methods~\cite{bib:sanei2018,
  bib:wang2019, bib:zhou2021, bib:wang2022} by eliminating the second pass over
the graph.

For clarity, we have tried throughout to present our algorithm in a single
listing, with only basic and explicit data structures.

\subsection{\label{sec:background}Related algorithms}
When considering 4-cycles, one must be careful to distinguish between algorithms
that \emph{detect} 4-cycles, \emph{count} 4-cycles, and \emph{enumerate}
4-cycles. A classical combinatorial result~\cite{bib:bondy} is that graphs with
many edges are guaranteed to contain many 4-cycles. Hence, 4-cycle detection is
an inherently simpler problem and algorithms for it can focus only on highly
sparse graphs. For instance, there is a combinatorial algorithm
of~\cite{bib:ayz1994} to detect $C_4$ in $O(m^{4/3})$ time. Note that this is
completely different from the situation for 3-cycles, where there exist
triangle-free graphs with $\Theta(n^2)$ edges.

For the problem of enumerating 4-cycles, the runtime typically should be linear
in $\#C_4$ (the number of copies of $C_4$ in the graph). Again, such algorithms
can focus on sparse graphs, as otherwise $\#C_4$ is guaranteed to be large.
Recently, Abboud et al.~\cite{bib:abboud_c4listing} proposed an algorithm to
enumerate 4-cycles with runtime $\tilde{O}( \min\{ n^2, m^{4/3} \} + \# C_4)$
time and $\tilde{O}(\min\{ n^2, m^{4/3} \})$ space (where $\tilde{O}(\cdot)$
hides logarithmic terms). The algorithm of~\cite{bib:wang2022} can also
enumerate 4-cycles with $O( m \bar \delta + \# C_4) \leq O(m^{3/2} + \# C_4)$
time and $O(m)$ space. There is also evidence that runtime $\Omega( \min\{ n^2,
m^{4/3} \} + \# C_4)$ is necessary for 4-cycle
enumeration~\cite{bib:jin2023removing,bib:abboud2023stronger}.

For the harder problem of 4-cycle \emph{counting}, there are two main
approaches. The first of which are algorithms based on fast matrix
multiplication~\cite{bib:ayz1994, bib:ayz1995, bib:ayz1997}. The best result in
general for 4-cycle counting takes $n^{\omega+o(1)}$ time~\cite{bib:ayz1997},
where $\omega$ is the matrix multiplication exponent. For sparse graphs,
Williams et al. described a randomized algorithm with
$m^{\frac{4\omega-1}{2\omega+1}+o(1)}$ runtime~\cite{bib:williams2015}, which is
$O(m^{1.48})$ given current bounds $\omega \leq
2.37286$~\cite{bib:alman2021}. We emphasize that such algorithms are not
practical for three distinct reasons: (i) For large sparse graphs, we desire
runtimes that are nearly linear in $m$, which may depend on graph sparsity
parameters such as $\bar \delta$; (ii) fast matrix multiplication inherently
requires super-linear storage for the recursive matrix subdivisions; (iii) fast
matrix multiplication algorithms, with the possible exception of Strassen's
algorithm, have very high constant factors and are generally considered
impractical.

A second approach is to use \emph{combinatorial} algorithms to count
4-cycles. This is the approach we adopt. These algorithms can take advantage of
various types of graph sparsity; in addition, since they are based on simple
practical data structures, the hidden constants are small and reasonable. For
example, the algorithm of~\cite{bib:sanei2018} runs in $O\left(\min \{
\sum_{u\in L} d(u)^2, \sum_{v\in R} d(v)^2 \} \right)$ time and $O(n)$ space for
a bipartite graph $G = (L \cup R, E)$. See also
\cite{bib:hovcevar2017,bib:pinar2017escape} for similar algorithms.

There are also many related algorithms for the simpler task of counting 3-cycles
(i.e. triangles). Since triangle-free graphs may have $\Theta(n^2)$ edges,
triangle-counting, triangle-enumeration, and triangle-detection tend to have
similar algorithmic behavior. In particular, a sparse graph can have
$\Theta(m^{3/2})$ triangles, and an optimal triangle-enumeration algorithm
should take $O(m^{3/2})$ time. Tighter bounds were given by Chiba and
Nishizeki~\cite{bib:chiba_nishizeki1985}; they gave an efficient, deterministic
algorithm to enumerate triangles with $O(ma)$ runtime where $a$ is the
arboricity of $G$. They also provided a deterministic algorithm to count
4-cycles with $O(ma)$ runtime and $O(m)$ space. There are also algorithms based
on fast matrix multiplication to count 3-cycles in $m^{\frac{2\omega}{\omega+1}
  + o(1)} \leq O(m^{1.41})$ time.

\subsection{\label{sec:notation}Notation}
The vertices of the graph are assumed to have labels in $[n] = \{1, \dots, n
\}$. We use $uv$ as shorthand for an edge $\{u,v\}$ when convenient. The
neighborhood of a vertex $v$ is given by $N(v)=\{u: uv\in E\}$ and $d(v)=\lvert
N(v) \rvert$ is its degree. We define a total order of $V$ by setting $u \prec
v$ if $d(u) < d(v)$; if $d(u) = d(v)$, we break ties arbitrarily (e.g. by vertex
ID).

A simple cycle of length $k$ is denoted by $C_k$. Note that any 4-cycle
$x,y,z,w$ has eight related 4-cycles, which can be obtained by shifting the
sequence (e.g. $y,z,w,x$), or by reversing the sequence (e.g. $x,w,z,y$). In
counting \emph{distinct} 4-cycles, we would count all eight such cycles as just
a single $C_4$. We use $\square(G)$ for the total number of distinct 4-cycles in
$G$ and likewise $\square(v)$ and $\square(e)$, respectively, for the number of
distinct 4-cycles involving a vertex $v$ and an edge $e$.

\section{\label{sec:main}Principal algorithm}
Our principal solution for counting 4-cycles is given in
Algorithm~\ref{alg:c4}. We emphasize that the algorithm does not require
explicit ordering of vertices in the graph data structure. We can determine
$u\prec v$ during the run in constant-time, by looking up the degrees of $u$ and
$v$. In Section~\ref{sec:opt} we describe how this algorithm can be optimized if
we pre-sort the data structure by degree order, which eliminates many
conditional checks.

\begin{algorithm}[H]
  \caption{4-cycle counting}\label{alg:c4}
  \begin{algorithmic}[1]
    \Require zero-initialized array $L$ of size $n$ indexed by $V$
    \For{$v\in V$}
    \label{algc4:start}
      \For{$u\in N(v)$}
        \If{$u\prec v$}
        \label{algc4:uprecv}
          \For{$y\in N(u)$}
            \If{$y\prec v$}
            \label{algc4:work_start}
              \State set $\square(G)\coloneqq \square(G) + L(y)$
              \State set $L(y) \coloneqq L(y) + 1$
            \EndIf
            \label{algc4:work_end}
          \EndFor
        \EndIf
      \EndFor
      \For{$u\in N(v)$}
      \label{algc4:2ndpass_start}
        \If{$u\prec v$}
          \For{$y\in N(u)$}
            \State set $L(y) \coloneqq 0$
          \EndFor
        \EndIf
      \EndFor
      \label{algc4:2ndpass_end}
    \EndFor
    \label{algc4:end}
  \end{algorithmic}
\end{algorithm}

To ensure that each $C_4$ is counted exactly once, we require $v$ to have higher
degree (or label in the case of tie) than $u,w$, and $y$. The array $L$ is used
to count the number of vertices $u \in N(v)$ where $u \prec v$ and $u$ itself
has a neighbor $y\prec v$. There are $\binom{L(y)}{2}$ unique neighbor pairs
$u,u'$ that give a $C_4$ of the form $\{v, u, y, u'\}$. To avoid spurious
counting from subsequent starting vertices, the second pass at
Lines~\ref{algc4:2ndpass_start}~--~\ref{algc4:2ndpass_end} clears the $L$
table. These steps are repeated for every vertex in $G$.

Note that Algorithm~\ref{alg:c4} does not explicitly compute the binomial
coefficient; instead it is implicitly computed via the identity
$\sum_{i}^{L(y)-1} i = \binom{L(y)}{2}$. This observation can be used to obtain
a particularly elegant variant of Algorithm~\ref{alg:c4} using hash tables in
place of the $L$ array, which eliminates the second pass
(Lines~\ref{algc4:2ndpass_start}~--~\ref{algc4:2ndpass_end}). See
Appendix~\ref{app:hash} for further details.

Before the formal analysis, let us consider a few simple cases for
illustration. Consider a $C_4$ as shown in Figure~\ref{fig:c4}, where we suppose
that $1 \prec 2 \prec 3 \prec 4$.

\begin{figure}[h]
  \centering
  \begin{tikzpicture}
    [
    vertex/.style={circle,draw=black,thin,node distance=0.75,inner sep=1.75},
    edge/.style={thin}
    ]
    \node [vertex] (1) {1};
    \node [vertex] (2) [right=of 1] {2};
    \node [vertex] (3) [below=of 2] {3};
    \node [vertex] (4) [below=of 1] {4};
    \draw [edge] (1) to (2);
    \draw [edge] (1) to (4);
    \draw [edge] (2) to (3);
    \draw [edge] (3) to (4);
  \end{tikzpicture}
  \caption{A simple cycle of length 4 ($C_4$).}\label{fig:c4}
\end{figure}

We begin in order from vertex $1$ in sequence to vertex $4$. Vertex $1$ is
skipped because its neighbors do not precede it in degree or label. From vertex
$2$ we get its neighbor $1$ but no updates can be made to $L$ because $4$ does
not precede $2$. At vertex $3$ only the $L(1)$ value can be incremented, due to
vertex $2$, but since the update to $L(1)$ is after the update to $\square(G)$,
then no 4-cycle can be detected. The $L(1)$ value is reset to zero before
proceeding to vertex $4$ to avoid spurious counting. Finally, from $4$ we find
$L(2)$ is incremented via $1$ and $3$ leading to the detection of the
anticipated cycle.

It should be noted that Algorithm~\ref{alg:c4} reports the count only and not
the explicit $C_4$. But, for the purpose of this discussion, we uniquely label a
$C_4$ by the ordering $(v, u, y, u')$ where $v$ is the vertex from which the
cycle was detected, $u,u'$ are neighbors of $v$ where $u<u'$, and $y$ is the
common neighbor of $u$ and $u'$. We remark that the order $u<u'$ is arbitrary as
there are two directed walks starting and ending from any vertex on a 4-cycle;
choosing $u'<u$ gives the counter-oriented walk. Tracing our algorithm on the
$C_4$ in Figure~\ref{fig:c4} implies $(4, 1, 2, 3)$, which denotes the walk
$4\rightarrow 1\rightarrow 2\rightarrow 3\rightarrow 4$. The counter-oriented
walk is simply an exchange in order between $1$ and $3$. Let us consider the
4-clique and its corresponding $C_4$'s given in Figure~\ref{fig:4clique}.

\begin{figure}[H]
  \centering
  \begin{subfigure}[t]{.2\columnwidth}
    \centering
    \begin{tikzpicture}
      [
      vertex/.style={circle,draw=black,thin,node distance=0.75,inner sep=1.75},
      edge/.style={thin}
      ]
      \node [vertex] (1) {1};
      \node [vertex] (2) [right=of 1] {2};
      \node [vertex] (3) [below=of 2] {3};
      \node [vertex] (4) [below=of 1] {4};
      \draw [edge] (1) to (2);
      \draw [edge] (1) to (4);
      \draw [edge] (2) to (3);
      \draw [edge] (3) to (4);
      \draw [edge] (1) to (3);
      \draw [edge] (2) to (4);
    \end{tikzpicture}
  \end{subfigure}
  \begin{subfigure}[t]{.2\columnwidth}
    \centering
    \begin{tikzpicture}
      [
      vertex/.style={circle,draw=black,thin,node distance=0.75,inner sep=1.75},
      edge/.style={thin}
      ]
      \node [vertex] (1) {1};
      \node [vertex] (2) [right=of 1] {2};
      \node [vertex] (3) [below=of 2] {3};
      \node [vertex] (4) [below=of 1] {4};
      \draw [edge] (1) to (4);
      \draw [edge] (2) to (3);
      \draw [edge] (1) to (3);
      \draw [edge] (2) to (4);
    \end{tikzpicture}
  \end{subfigure}
  \begin{subfigure}[t]{.2\columnwidth}
    \centering
    \begin{tikzpicture}
      [
      vertex/.style={circle,draw=black,thin,node distance=0.75,inner sep=1.75},
      edge/.style={thin}
      ]
      \node [vertex] (1) {1};
      \node [vertex] (2) [right=of 1] {2};
      \node [vertex] (3) [below=of 2] {3};
      \node [vertex] (4) [below=of 1] {4};
      \draw [edge] (1) to (2);
      \draw [edge] (3) to (4);
      \draw [edge] (1) to (3);
      \draw [edge] (2) to (4);
    \end{tikzpicture}
  \end{subfigure}
  \begin{subfigure}[t]{.2\columnwidth}
    \centering
    \begin{tikzpicture}
      [
      vertex/.style={circle,draw=black,thin,node distance=0.75,inner sep=1.75},
      edge/.style={thin}
      ]
      \node [vertex] (1) {1};
      \node [vertex] (2) [right=of 1] {2};
      \node [vertex] (3) [below=of 2] {3};
      \node [vertex] (4) [below=of 1] {4};
      \draw [edge] (1) to (2);
      \draw [edge] (1) to (4);
      \draw [edge] (2) to (3);
      \draw [edge] (3) to (4);
    \end{tikzpicture}
  \end{subfigure}
  \caption{A 4-clique and its unique $C_4$'s.}\label{fig:4clique} 
\end{figure}

There are three distinct copies of $C_4$ in a 4-clique.\footnote{There are
$\frac{(k-1)!}{2} \binom{n}{k}$ distinct copies of $C_k$ in an $n$-vertex
clique.} We'll proceed again sequentially from vertices $1$ to $4$. As before,
vertex $1$ is skipped due to degree/label ordering. From $2$ there can be no
updates to $L$ because neighbors of $1$ do not precede $2$. At $3$ the value
$L(2)$ is updated from $1$ and $L(1)$ is updated from $2$, but neither meet the
minimum count of two for a cycle. After zeroing out $L(1)$ and $L(2)$ we
continue to vertex $4$. At vertex $4$ we find $L(1)$, $L(2)$, $L(3)$ all have
value $2$, corresponding to the 4-cycles $(4,2,1,3)$, $(4,1,2,3)$, $(4,1,3,2)$
respectively.

Let us now formally prove the algorithm properties asserted by
Theorem~\ref{thm:c4}.

\setcounter{theorem}{0}
\begin{proposition}
  \label{clm:arbor}
  Algorithm~\ref{alg:c4} takes $O(m \bar \delta)$ work and $n$ space.
\end{proposition}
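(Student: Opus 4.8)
The plan is to bound the space and the running time separately; the space bound is immediate, and essentially all of the running-time argument is a single re-indexing of a double sum.

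\textbf{Space.} The only auxiliary storage used by Algorithm~\ref{alg:c4} beyond the (read-only) input graph is the array $L$, which has size $n$, together with the scalar accumulator $\square(G)$ and a constant number of loop variables. Hence the working space is $O(n)$.

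\textbf{Time.} I would charge the work to three pieces. First, the outer loop executes once per vertex, giving $O(n)$. Second, for each $v$ the algorithm scans $N(v)$ and performs a constant-time test $u \prec v$ for every neighbor, in both the counting pass and the clearing pass; since $\sum_{v\in V} d(v) = 2m$, this contributes $O(m)$ in total. Third — the dominant term — for each $v$ and each $u\in N(v)$ with $u\prec v$, the algorithm iterates over all $y\in N(u)$, performing $O(1)$ work per $y$ in each of the two passes (a comparison and at most two array writes). This contributes
\[
O\!\left( \sum_{v\in V}\ \sum_{\substack{u\in N(v)\\ u\prec v}} d(u) \right).
\]
The key step is to evaluate this sum by re-indexing over edges: each edge $\{u,v\}\in E$ contributes exactly once, via the unique orientation in which the $\prec$-smaller endpoint is the "inner" vertex, and because $u\prec v$ implies $d(u)\le d(v)$ (whether $d(u)<d(v)$ or $d(u)=d(v)$ with the tie broken toward $u$), the corresponding term equals $d(u)=\min\{d(u),d(v)\}$. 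Therefore
\[
\sum_{v\in V}\ \sum_{\substack{u\in N(v)\\ u\prec v}} d(u) \;=\; \sum_{\{u,v\}\in E}\min\{d(u),d(v)\} \;=\; m\,\bar\delta(G),
\]
directly from the definition of the average degeneracy. Combining the three pieces gives total work $O(m\bar\delta(G)+m+n)=O(m\bar\delta(G))$, where the final simplification uses $\bar\delta(G)\ge 1$ (every edge has $\min\{d(u),d(v)\}\ge 1$) and the standing assumption that $G$ has no isolated vertices so that $n=O(m)$; absent that assumption one simply states the bound as $O(m\bar\delta(G)+n)$.

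I do not expect a genuine obstacle. Correctness of the iteration pattern — that the two passes visit precisely the triples $(v,u,y)$ with $u\in N(v)$, $u\prec v$, $y\in N(u)$ — is read off directly from the pseudocode, and the only point that needs a moment of care is that $\prec$ must be one fixed total order throughout, so that each edge is oriented in exactly one direction and is counted once in the re-indexed sum.
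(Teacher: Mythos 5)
Your proposal is correct and follows essentially the same route as the paper: bound the neighborhood-scan and clearing passes by $O(m)$, re-index the dominant double sum over edges oriented so the $\prec$-smaller endpoint is the inner vertex, and use $u \prec v \Rightarrow d(u) \le d(v)$ to get $\sum_{uv \in E} \min\{d(u),d(v)\} = m\bar\delta(G)$. Your extra care about the $O(n)$ outer-loop term and isolated vertices is a minor refinement the paper leaves implicit, but it does not change the argument.
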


\begin{proof}
  The space usage is due to storing the array $L$.
  
  The work for Lines~\ref{algc4:start}~--~\ref{algc4:uprecv} is at most $\sum_{v
    \in V} \sum_{u \in N(v)} O(1) = O(m)$.

  For a given pair of vertices $v,y$,
  Lines~\ref{algc4:work_start}~--~\ref{algc4:work_end} take $O(1)$ time. So, the
  total work for Lines~\ref{algc4:work_start}~--~\ref{algc4:work_end} is given
  by $$ \sum_{v \in V} \sum_{\substack{u \in N(v) \\ u \prec v}} \sum_{
    \substack{y \in N(u)}} O(1) = \sum_{\substack{uv \in E \\ u \prec v}} O(
  d(u) )
    $$ Note that if $u \prec v$ then $d(u) \leq d(v)$. So this is at most
  $\sum_{uv \in E} \min \{ d(u), d(v) \} = m \bar \delta$ by definition. The
  analysis of Lines~\ref{algc4:2ndpass_start}~--~\ref{algc4:2ndpass_end} is
  completely analogous.
\end{proof}

\begin{proposition}
  \label{prop:correct}
  Algorithm~\ref{alg:c4} gives the correct count of $\square(G)$.
\end{proposition}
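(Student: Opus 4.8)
The plan is to show a bijection between distinct 4-cycles of $G$ and the increments made to $\square(G)$ over the course of the algorithm. First I would fix a canonical representative for each 4-cycle: given a 4-cycle on vertex set $\{a,b,c,d\}$, let $v$ be its $\prec$-maximum vertex, and note $v$ has exactly two neighbors on the cycle, say $u$ and $u'$, with the remaining vertex $y$ being the common neighbor of $u$ and $u'$ opposite to $v$. Since $v$ is the $\prec$-maximum, we have $u \prec v$, $u' \prec v$, and $y \prec v$. So every 4-cycle determines a tuple $(v; \{u,u'\}; y)$ where $u,u' \in N(v)$ are distinct, $u \prec v$, $u' \prec v$, $y \in N(u) \cap N(u')$, and $y \prec v$; conversely such a tuple (with $u \ne u'$, and $y \notin \{u,u',v\}$, which is automatic from $y \prec v$ and simplicity) determines a unique 4-cycle $v$-$u$-$y$-$u'$-$v$. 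The key point is that distinct unordered pairs $\{u,u'\}$ with the same $v,y$ give genuinely distinct 4-cycles, and that each 4-cycle arises from exactly one $v$ (its $\prec$-max) but from two choices of ordered pair; as unordered pairs, exactly one.

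Next I would analyze one outer iteration, for a fixed $v$. The inner double loop over $u \in N(v)$ with $u \prec v$ and then $y \in N(u)$ with $y \prec v$ visits, for each fixed $y$, exactly the set $S_y = \{u \in N(v) : u \prec v,\ y \in N(u)\}$, in some order $u_1, u_2, \dots, u_k$ (where $k = |S_y|$, which is $\le L$-relevant). I would argue by a simple invariant: at the moment the loop processes the pair $(u_i, y)$, the value $L(y)$ equals $i-1$ — that is, the number of previously-processed neighbors $u_1,\dots,u_{i-1}$ in $S_y$ — because $L(y)$ was initialized to $0$ (it is zero-initialized globally for $v$'s first iteration, and reset to $0$ by the second pass of the previous outer iteration) and is incremented exactly once per processed element of $S_y$, and crucially the increment $L(y) \mathrel{+}= 1$ happens \emph{after} the read $\square(G) \mathrel{+}= L(y)$. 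Hence processing $(u_i, y)$ adds $i-1$ to $\square(G)$, so the total contribution from a fixed $v$ and fixed $y$ is $\sum_{i=1}^{k} (i-1) = \binom{k}{2}$, which counts exactly the unordered pairs $\{u,u'\} \subseteq S_y$ — i.e., exactly the 4-cycles with this $\prec$-max $v$ and this opposite vertex $y$. A subtlety to handle carefully: different vertices $y$ processed within the same outer iteration share the array $L$, but they touch disjoint entries ($L(y)$ vs $L(y')$), so there is no interference; and between outer iterations the second pass restores every touched entry to $0$, reaching exactly the cells that were modified, so $L$ is all-zero at the start of each outer iteration.

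Finally I would sum over all $v$ and all $y$: the total increment to $\square(G)$ equals $\sum_{v}\sum_{y} \binom{|S_y^{(v)}|}{2}$, which by the bijection of the first paragraph equals the number of distinct 4-cycles $\square(G)$. I would also note that $\square(G)$ starts at $0$ (implicitly, or state this as a precondition alongside the zero-initialized $L$).

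\textbf{Main obstacle.} The routine arithmetic ($\sum (i-1) = \binom{k}{2}$) is trivial; the part demanding care is the bookkeeping around the array $L$ — specifically arguing that the "effective" value of $L(y)$ seen during the work loop is always the running count of prior elements of $S_y$, which requires (i) that $L$ is genuinely all-zero at the start of each outer iteration (relying on the reset pass covering exactly the modified cells, including the edge case where $S_y$ is empty so nothing needs resetting), and (ii) that within one outer iteration no two distinct opposite-vertices collide on the same $L$ cell. Everything else is a clean matching argument.
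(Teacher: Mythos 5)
Your proof is correct and follows essentially the same route as the paper's: canonically writing each 4-cycle as $(v,u,y,u')$ with $v$ the $\prec$-largest vertex, observing that for fixed $v,y$ the algorithm's increments to $\square(G)$ sum to $\binom{\ell}{2}$ where $\ell$ is the number of common neighbors of $v$ and $y$ preceding $v$, and summing over $v,y$. The only difference is that you spell out the $L$-array bookkeeping (the reset pass and non-interference between distinct $y$), which the paper leaves implicit.
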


\begin{proof}
  Each distinct $C_4$ can be uniquely written by the tuple $(v,u,y,u')$ where
  $u\prec u'\prec v$ and $y\prec v$. We claim that the algorithm counts each
  such 4-tuple exactly once, namely, in the loop on $v$. For fixed $v,y$, note
  that, at the end of the iteration, the final value $\ell = L(y)$ counts the
  number of common neighbors in $N(v) \cap N(y)$ which precede $v$. In
  particular there are $\binom{\ell}{2}$ unique combinations of neighbor pairs
  $u,u' \in N(v) \cap N(y)$ with $u \prec u' \prec v$. Furthermore, the
  algorithm will increment $\square(G)$ by $\sum_{i=0}^{\ell-1} i =
  \binom{\ell}{2}$.
\end{proof}

A straightforward modification of Algorithm~\ref{alg:c4} can be used to list all
copies of $C_4$. We modify $L(y)$ to hold a linked-list, instead of just a
count, for each path $(v,u,y)$ with $u\prec v$ and $y\prec v$. See
Algorithm~\ref{alg:c4list}.

\begin{algorithm}[H]
  \caption{4-cycle enumeration}\label{alg:c4list}
  \begin{algorithmic}[1]
    \Require linked-list $L(v)$ for each vertex $v \in V$, all initialized to be
    empty
    \For{$v\in V$}
      \For{$u\in N(v)$}
        \If{$u\prec v$}
          \For{$y\in N(u)$}
            \If{$y\prec v$}
              \For{each vertex $x\in L(y)$}
              \label{algc4list:output_start}
                \State output $(v,u,y,x)$
              \EndFor
              \label{algc4list:output_end}
              \State add $u$ to end of list $L(y)$
            \EndIf
          \EndFor
        \EndIf
      \EndFor
      \For{$u\in N(v)$}
        \If{$u\prec v$}
          \For{$y\in N(u)$}
            \State set $L(y) \coloneqq \emptyset$
          \EndFor
        \EndIf
      \EndFor
    \EndFor
  \end{algorithmic}
\end{algorithm}

The loop at Lines~\ref{algc4list:output_start}--\ref{algc4list:output_end}
clearly takes $O(\square(G))$ time (as each iteration outputs a new $C_4$). The
other steps are as in Algorithm~\ref{alg:c4}. This proves
Theorem~\ref{thm:c4list}.

\section{\label{sec:completecount} Local 4-cycle counts} 
Formally, the local 4-cycle counts $\square(v)$ and $\square(e)$ respectively
denote the number of $C_4$'s containing vertex $v$ and edge $e$. Note that,
given the $\square(v)$ counts, we can recover $\square(G)$ from the simple
identity:

$$
\square(G) = \frac{1}{4} \sum_{v \in V} \square(v)
$$
and likewise given the $\square(e)$ counts, we can recover $\square(G)$
and $\square(v)$ from the identities

$$
\square(v) = \frac{1}{2} \sum_{u \in N(v)} \square( uv ), \qquad \square(G) =
\frac{1}{4} \sum_{e \in E} \square(e).
$$

Algorithm~\ref{alg:c4} can be modified to output the local count for each
vertex, as shown in Algorithm~\ref{alg:c4vertex}:

\begin{algorithm}[H]
  \caption{Vertex-local 4-cycle counting}\label{alg:c4vertex}
  \begin{algorithmic}[1]
    \Require zero-initialized arrays $L$ and $\square$ of size $n$ indexed by
    $V$, where $L$ holds 2-tuples.
    \For{$v\in V$}
      \For{$u\in N(v)$}
        \If{$u\prec v$}
          \For{$y\in N(u)$}
            \If{$y\prec v$}
              \State set $\square(v) := \square(v) + L(y).\text{orig}$
              \label{algc4vertex:1stform}
              \State set $\square(y) := \square(y) + L(y).\text{orig}$
              \label{algc4vertex:2ndform}
              \State set $L(y).\text{copy} \coloneqq L(y).\text{orig}$
              \label{algc4vertex:g1}
              \Comment{save original $L(y)$ value}
              \State set $L(y).\text{orig} \coloneqq L(y).\text{orig} + 1$
              \label{algc4vertex:2pathend}
            \EndIf
          \EndFor
        \EndIf
      \EndFor
      \For{$u\in N(v)$}
      \label{algc4vertex:2ndpass_start}
        \If{$u\prec v$}
          \For{$y\in N(u)$}
            \If{$y\prec v$}
              \State set $\square(u)\coloneqq \square(u) + L(y).\text{copy}$
              \label{algc4vertex:3rdform}
              \State set $L(y).\text{orig} \coloneqq 0$
            \EndIf
          \EndFor
        \EndIf
      \EndFor
    \EndFor
    \label{algc4vertex:2ndpass_end}
  \end{algorithmic}
\end{algorithm}

Recall in our main algorithm that a unique 4-cycle starting from $v$ is detected
on updates to $L(y)$. Thus we increment $\square(v),\square(y)$ in this same
pass in Algorithm~\ref{alg:c4vertex}. The local 4-cycle count for any $u$ vertex
adjacent to $v,y$ is simply $L(y)-1$ because it is in a 4-cycle with all other
neighbors of $v,y$. Therefore $\square(u)$ is updated in the second pass because
the final $L(y)$ count is needed. In Algorithm~\ref{alg:c4vertex}, we modify $L$
to be an array of 2-tuples to double buffer the counts of
$L(y)$.\footnote{Instead of copying the count, a single bit can be used to flag
when to reset the counts to zero in the first pass.} This retains the final
count for each $u$ while simultaneously allowing the $L(y)$ counts to be reset
to zero for the next $v$.

This clearly has the same complexity as Algorithm~\ref{alg:c4}; in particular,
Algorithm~\ref{alg:c4vertex} still uses just two passes over the graph. The
following result thus shows Theorem~\ref{thm:c4vertex}:

\begin{proposition}
  \label{prop:c4vertex_correct}
  Algorithm~\ref{alg:c4vertex} gives the correct count of all unique 4-cycles
  incident to each vertex.
\end{proposition}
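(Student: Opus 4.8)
The plan is to show that for each distinct 4-cycle $C$, represented canonically by the tuple $(v,u,y,u')$ with $u \prec u' \prec v$ and $y \prec v$, the algorithm contributes exactly one unit to each of $\square(v)$, $\square(y)$, $\square(u)$, and $\square(u')$, and contributes nothing to any other $\square(\cdot)$; this is exactly the condition that each vertex-local count is correct. Since Proposition~\ref{prop:correct} already establishes the bookkeeping for $\square(G)$ via the identity $\sum_{i=0}^{\ell-1} i = \binom{\ell}{2}$, I would reuse that same structure here, tracking which of the two ``endpoints'' ($v$, $y$) versus ``midpoints'' ($u$, $u'$) receives credit at which moment.

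First I would fix $v$ and $y$ and, as in Proposition~\ref{prop:correct}, let $u_1, u_2, \dots, u_\ell$ be the common neighbors in $N(v) \cap N(y)$ that precede $v$, listed in the order the inner loop on $u \in N(v)$ visits them during the iteration for $v$. At the moment $u = u_j$ is processed in the first pass, the field $L(y).\mathrm{orig}$ equals $j-1$, so Lines~\ref{algc4vertex:1stform}--\ref{algc4vertex:2pathend} add $j-1$ to $\square(v)$ and $j-1$ to $\square(y)$, save $L(y).\mathrm{copy} := j-1$, and then increment $L(y).\mathrm{orig}$ to $j$. Summing over $j = 1, \dots, \ell$, the pair $(v,y)$ accumulates $\binom{\ell}{2}$ into each of $\square(v)$ and $\square(y)$ — precisely the number of cycles $(v, u_i, y, u_{i'})$ with $i < i'$, i.e.\ the number of 4-cycles through this $(v,y)$ diagonal. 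The subtlety is that $L(y).\mathrm{copy}$ is overwritten on each visit, so I must note that it is read in the second pass \emph{before} being reset; since the second pass traverses $N(v)$ and $N(u)$ in the same order as the first, when it reaches $u = u_j$ the value $L(y).\mathrm{copy}$ still holds $j-1$ only if no later $u_{j'}$ ($j' > j$) with the same $y$ has been processed in between — but later $u_{j'}$ \emph{will} overwrite it. This is the main obstacle, and resolving it is the crux of the argument.

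To handle it correctly I would look more carefully at the second pass: it sets $L(y).\mathrm{orig} := 0$ immediately after reading $L(y).\mathrm{copy}$, but $L(y).\mathrm{copy}$ itself is only written in the first pass and never in the second. So when the second pass visits $u_j$ (for the common neighbor $y$), it adds the value $L(y).\mathrm{copy}$ — which, after the \emph{entire} first pass, equals $\ell - 1$ for every $u_j$, since the last write was $L(y).\mathrm{copy} := \ell - 1$ when $u_\ell$ was processed. Hence each of $u_1, \dots, u_\ell$ receives $\ell - 1$ into its $\square(\cdot)$ from this $(v,y)$ pair. This is exactly right: the midpoint $u_i$ lies on a 4-cycle $(v, u_i, y, u_{i'})$ together with each of the other $\ell - 1$ common neighbors $u_{i'}$, so $u_i$ should get credit $\ell - 1$ for the diagonal $(v,y)$. (Here I must double-check that $L(y).\mathrm{copy}$ is not zeroed between passes; it is only $L(y).\mathrm{orig}$ that is reset, and only in the trailing cleanup loop — so $L(y).\mathrm{copy}$ survives from the end of the first pass into the second pass, as needed. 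I would state this buffering invariant explicitly.)

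Finally I would assemble the global count: summing over all pairs $(v,y)$ processed (equivalently, over all ``diagonals'' of all 4-cycles with $v$ the $\prec$-maximum vertex), each 4-cycle $(v,u,y,u')$ with $u \prec u' \prec v$, $y \prec v$ contributes $1$ to $\square(v)$, $1$ to $\square(y)$, and $1$ each to $\square(u)$ and $\square(u')$ (the latter two split as ``one of the $\ell-1$ units'' from the midpoint accounting), and contributes to no other vertex's count since the representation $(v,u,y,u')$ is unique. Since every 4-cycle uses each of its four vertices exactly once as a vertex, this gives $\square(v)$ correctly for all $v$. The one remaining check is that no 4-cycle is counted under two different canonical tuples and that every 4-cycle through a given vertex $w$ does arise as $(v,u,y,u')$ with $w \in \{v,u,y,u'\}$ — this is immediate from the fact that any four vertices of a 4-cycle have a unique $\prec$-maximum $v$, a unique opposite vertex $y$, and two remaining vertices orderable as $u \prec u'$, which is exactly the canonical form already used in Proposition~\ref{prop:correct}. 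Putting these pieces together completes the proof.
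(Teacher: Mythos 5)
Your proof is correct and follows essentially the same route as the paper's: the unique canonical form $(v,u,y,u')$ with $u\prec u'\prec v$ and $y\prec v$, the $\sum_{i=0}^{\ell-1} i=\binom{\ell}{2}$ accumulation crediting the top vertex $v$ and the antipodal vertex $y$ in the first pass, and the fact that $L(y).\text{copy}$ equals $|S(v,y)|-1$ throughout the second pass, crediting each midpoint. Your explicit check that the copy field is written only during the first pass (so every second-pass read sees $\ell-1$, and stale values are overwritten before they are ever read) just spells out a detail the paper asserts more tersely.
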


\begin{proof}
  For any vertices $v, y$ with $y \prec v$, let $S(v,y)$ denote the set
  of vertices $u \in N(v) \cap N(y)$ where $u \prec v$.
  
  Consider a given vertex $x$ and a 4-cycle $C$ containing $x$. Let $v$ denote
  the $\prec$-largest vertex in $C$. There are a number of cases to consider.

  First, suppose $x = v$. In this case, we can write the cycle uniquely as $C =
  (x,u, y, u')$ where $u \prec u'$. The total number of such cycles is then
  $\sum_{y} \binom{ |S(x,y)|}{2}$. At Line~\ref{algc4vertex:2pathend} the value
  $L(y).\text{orig}$ is incremented for all vertices $y \prec v$ with $u \in
  S(x,y)$. At Line~\ref{algc4vertex:1stform} the running value of
  $L(y).\text{orig}$ gets added to $\square(x)$. Overall,
  Line~\ref{algc4vertex:1stform} accumulates a total value of $\sum_{y}
  \sum_{i=0}^{|S(x,y)|-1} i = \sum_{y} \binom{|S(x,y)|}{2}$.

  Second, suppose $x$ is antipodal to $v$ in the cycle, that is, $C =
  (v,u,x,u')$ where $u \prec u'$. This is very similar to the first case; the
  total number of such cycles gets counted by Line~\ref{algc4vertex:2ndform}.

  Finally, suppose $x$ is adjacent to $v$ in the cycle. In this case, we can
  write $C = (v,u,y,x)$. Here, for a given $v,y$, we can choose $u$ to be any
  vertex other than $x$ in $S(v,y)$. So the total number of such cycles is the
  sum of $(|S(v,y)| - 1)$ over all pairs $v,y$ with $x \in S(v,y)$. At the
  beginning of the loop at Line~\ref{algc4vertex:2ndpass_start}, the value
  $L(y).\text{copy}$ is equal to $|S(v,y)| - 1$; so this is precisely what is
  accumulated to $\square(x)$ at Line~\ref{algc4vertex:3rdform}.
\end{proof}

Our next algorithm computes the 4-cycle counts on each edge and is given in
Algorithm~\ref{alg:c4edge}. To avoid hash tables, we use an offset array $T$ to
map the start of an edge list in $M$ for each vertex. This array is simply a
prefix-sum array over the degrees of each vertex so a $T(v)$ element points to
the start of edges for $v$; note that $T$ can be assumed to be part of the
adjacency-list representation of $G$. This is similar to an approach
in~\cite{bib:burkhardt2021b} for identifying triangle neighbors, but comes at
the cost of greater algorithmic complexity.

\begin{algorithm}[H]
  \caption{Edge-local 4-cycle counting}\label{alg:c4edge}
  \begin{algorithmic}[1]
    \Require zero-initialized array $L$ of size $n$ indexed by
    $V$, where $L$ holds 2-tuples.
    \Require zero-initialized array $\square$ of size $m$
    \Require zero-initialized array $M$ of size $2m$
    \Require array $T$ of size $n$ defined by $T(v)=\sum_{v'<v}d(v')$
    \Comment{prefix sum array}
    \For{$v\in V$}
    \label{algc4edge:main_start}
      \For{$i=0$ \textbf{to} $d(v)-1$}
        \State set $u\coloneqq i^{th}$ neighbor of $v$        
        \If{$u\prec v$}
          \For{$j=0$ \textbf{to} $d(u)-1$}
            \State set $y\coloneqq j^{th}$ neighbor of $u$
            \If{$y\prec v$}
              \State set $L(y).\text{copy} \coloneqq L(y).\text{orig}$
              \Comment{save original $L(y)$ value}
              \label{algc4edge:g1}
              \State set $L(y).\text{orig} \coloneqq L(y).\text{orig} + 1$
            \EndIf
          \EndFor
        \EndIf
      \EndFor
      \For{$i=0$ \textbf{to} $d(v)-1$} \label{algc4edge:2ndpass_start}
        \State set $u\coloneqq i^{th}$ neighbor of $v$
        \If{$u\prec v$}
          \For{$j=0$ \textbf{to} $d(u)-1$}
            \State set $y\coloneqq j^{th}$ neighbor of $u$
            \If{$y \prec v$}
            \State set $M(T(v)+i)\coloneqq M(T(v)+i)$ + $L(y).\text{copy}$
            \label{algc4edge:g2}
            \State set $M(T(u)+j)\coloneqq M(T(u)+j)$ + $L(y).\text{copy}$
            \label{algc4edge:g3}
              \State set $L(y).\text{orig} \coloneqq 0$
            \EndIf
          \EndFor
        \EndIf
      \EndFor
    \EndFor
    \label{algc4edge:main_end}
    \For{$v\in V$}
      \For{$i=0$ \textbf{to} $d(v)-1$}
        \State set $u\coloneqq i^{th}$ neighbor of $v$
          \If{$u\prec v$}
            \State scan $N(u)$ to find index $j$ such that $v$ is the $j^{th}$
            neighbor of $u$.
            \label{algc4edge:find}
            \State set $\square(e) \coloneqq M(T(v)+i) + M(T(u)+j)$
            \label{algc4edge:add}
            \Comment{$e=\{v,u\}$}
          \EndIf
      \EndFor
    \EndFor
  \end{algorithmic}
\end{algorithm}

We use an array $M$ to maintain $C_4$ counts per edge, where $M$ is indexed in
the same ordering as the adjacency list representation. That is, the value
$M(T(v) + i)$ stores the corresponding value for the edge $vu$ where $u$ is the
$i^{th}$ neighbor of $v$. To see the correctness of Algorithm~\ref{alg:c4edge},
note that $\square(e)$ for edge $e=\{v,u\}$ is the sum of the counts stored in
$M$ for $(v,u)$ and $(u,v)$. These counts are added together in the final pass
over the edges.

The correctness and runtime of Algorithm~\ref{alg:c4edge} are given next by
Propositions~\ref{prop:c4edge_correct} and~\ref{prop:c4edge_time}, and
subsequently Theorem~\ref{thm:c4edge} holds.

\begin{proposition}
\label{prop:c4edge_correct}
  Algorithm~\ref{alg:c4edge} gives the correct count of all unique 4-cycles
  incident to each edge.
\end{proposition}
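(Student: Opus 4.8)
The plan is to mirror the structure of the proof of Proposition~\ref{prop:c4vertex_correct}, since Algorithm~\ref{alg:c4edge} is essentially the edge-analogue of Algorithm~\ref{alg:c4vertex}. First I would fix an edge $e=\{v^\star,u^\star\}$ and a $C_4$ containing $e$, and argue that this cycle is counted exactly once in the main loop (Lines~\ref{algc4edge:main_start}~--~\ref{algc4edge:main_end}) when the loop variable $v$ equals the $\prec$-largest vertex of the cycle. As in the earlier proof, I would introduce the notation $S(v,y) = \{u \in N(v) \cap N(y) : u \prec v\}$, and recall that after the first inner pass the value $L(y).\text{orig}$ records $|S(v,y)|$, while during the second pass $L(y).\text{copy}$ holds the running count that ends at $|S(v,y)|-1$.

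The case analysis splits on where the edge $e$ sits relative to the apex $v$ of its cycle. Writing the cycle uniquely as $(v,u,y,u')$ with $u \prec u' \prec v$ and $y \prec v$, the edge $e$ is one of $\{v,u\}$, $\{u,y\}$, $\{y,u'\}$, or $\{u',v\}$; by the $u \leftrightarrow u'$ symmetry this reduces to two cases: $e$ incident to the apex ($e=\{v,u'\}$, equivalently $\{v,u\}$) and $e$ the ``far'' edge ($e=\{u,y\}$, equivalently $\{y,u'\}$). For the apex case, I would show that when $v$ is the loop vertex and $u'$ is the neighbor being processed at position $i$, the cycle $(v,u,y,u')$ is accounted for exactly once: in the second pass, at the iteration with this $v$, this $u'$ (index $i$), and this $y$, the quantity $L(y).\text{copy} = |S(v,y)|-1$ counts precisely the admissible partners $u$, and this is added to $M(T(v)+i)$ at Line~\ref{algc4edge:g2}; summing over $y$ gives the correct total of cycles through $\{v,u'\}$ that have $v$ as apex, and since every cycle through a fixed edge has a unique apex, summing over $v$ gives $\square(\{v,u'\})$ restricted to the contributions landing in $M(T(v)+i)$. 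For the far-edge case, when the cycle is $(v,u,y,u')$ and we view $e=\{u,y\}$, the relevant iteration is the one where the loop vertex is $v$, the processed neighbor is $u$ at index $i$, and the inner neighbor of $u$ is $y$ at index $j$; here $L(y).\text{copy}$ again equals $|S(v,y)|-1$, the number of valid choices of the fourth vertex $u'$, and this is added to $M(T(u)+j)$ at Line~\ref{algc4edge:g3}.

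Next I would combine the two stores: for a generic edge $\{a,b\}$, the array slot $M(T(a)+i)$ (where $b$ is the $i$th neighbor of $a$) accumulates exactly those cycles through $\{a,b\}$ in which $a$ plays the ``$v$ or $y$'' role and $b$ plays the ``$u$ or $u'$'' role — more precisely, cycles whose apex $\succ$ both $a$'s partner side, handled by the appropriate branch — while $M(T(b)+j)$ (where $a$ is the $j$th neighbor of $b$) accumulates the complementary set of cycles. I would verify that these two sets are disjoint and exhaustive: every $C_4$ through $\{a,b\}$ is, relative to its apex, either of the apex-incident type or the far-edge type with respect to that edge, and a short check of the $\prec$-orderings shows it contributes to exactly one of the two slots. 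Hence $M(T(v)+i) + M(T(u)+j) = \square(\{v,u\})$, which is exactly the value assigned to $\square(e)$ in the final pass at Line~\ref{algc4edge:add}, using the index $j$ located by the scan at Line~\ref{algc4edge:find}.

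The main obstacle I anticipate is the bookkeeping in the last paragraph: precisely pinning down which of the two $M$-slots a given cycle lands in, and confirming the partition is exactly a partition (no double-counting, no omission). The subtlety is that the same physical cycle is examined under several $(v,u,y)$ instantiations during the run — once with each of its four vertices possibly serving as the loop vertex, but only the apex survives the $\prec$-tests — and within the surviving instantiation it is stored into $M$ via Line~\ref{algc4edge:g2} for one edge and Line~\ref{algc4edge:g3} for another. I would handle this by tabulating, for the canonical representation $(v,u,y,u')$, exactly which $(\text{array index})$ receives a $+1$ contribution for each of the four edges of the cycle, and checking that summing the two relevant slots for each edge recovers that edge's share with multiplicity one. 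Everything else — that $L(y).\text{orig}$ correctly equals $|S(v,y)|$ after the first pass, that the double-buffering via $.\text{copy}$ faithfully preserves $|S(v,y)|-1$ through the reset, and that the prefix-sum array $T$ addresses $M$ consistently with the adjacency layout — is routine and carries over verbatim from the reasoning already used for Algorithms~\ref{alg:c4} and~\ref{alg:c4vertex}.
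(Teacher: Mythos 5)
Your proposal follows essentially the same route as the paper's proof: fix the edge $e$ and a 4-cycle through it, let $v$ be the $\prec$-largest vertex of the cycle, split on whether $v$ is an endpoint of $e$, and use the fact that throughout the second pass $L(y).\text{copy}=|S(v,y)|-1$, so that Line~\ref{algc4edge:g2} accumulates the apex-incident contributions into $M(T(v)+i)$ and Line~\ref{algc4edge:g3} the far-edge contributions into $M(T(u)+j)$, which Line~\ref{algc4edge:add} finally adds; the paper is in fact terser than you are about the final slot bookkeeping.

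One slip to fix when you carry out the tabulation you propose: your description of which cycles land in the slot $M(T(a)+i)$ is backwards for the Line~\ref{algc4edge:g3} case. That line writes to the slot of the \emph{middle} vertex $u$ (with the inner vertex $y$ as the indexed target), so the $a$-side slot of edge $\{a,b\}$ collects (i) cycles in which $a$ is the apex and $b$ is apex-adjacent (via Line~\ref{algc4edge:g2}) and (ii) cycles in which $a$ is apex-adjacent and $b$ is the vertex opposite the apex (via Line~\ref{algc4edge:g3}) --- not cycles in which ``$a$ plays the $v$ or $y$ role.'' Since a cycle through $\{a,b\}$ whose apex lies outside the edge has exactly one endpoint adjacent to the apex and the other opposite it, the two slots still partition the cycles through the edge, and because Line~\ref{algc4edge:add} sums both slots this mis-attribution does not affect the final count; but as literally stated your per-slot characterization would fail the ``disjoint and exhaustive'' check you plan to run, so correct it before verifying.
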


\begin{proof}
  For any vertices $v, y$ with $y \prec v$, let $S(v,y)$ denote the set of
  vertices $u \in N(v) \cap N(y)$ where $u \prec v$.
    
  Consider an edge $e$ and a 4-cycle $C$ containing $e$.  Let $v$ denote the
  $\prec$-largest vertex in $C$. There are two cases to consider.
  
  First, suppose $v$ is an endpoint of $e$. We can write $e = (v,u)$ and we can
  write $C$ uniquely as $(v,u, y,u')$. In this case, the total number of such
  cycles is the sum of $|S(v,y) - 1|$ over all pairs $v,y$ with $u \in S(v,y)$,
  since for a given $y$ we can choose $u'$ to be any element of $S(v,y)$ other
  than $u$.  At the beginning of the loop at Line~\ref{algc4edge:2ndpass_start},
  the value $L(y).\text{copy}$ is equal to $|S(v,y)| - 1$; so this is precisely
  what is accumulated to the counter $M(T(v) + i)$ at Line~\ref{algc4edge:g2}.

  Otherwise, suppose $v$ is not an endpoint of $e$. We can write $e = (u,y)$ and
  $C = (v,u,y,u')$. This is very similar to the first case; the total number of
  such cycles is accumulated to the counter $M(T(u) + j)$ at
  Line~\ref{algc4edge:g3}.

  Overall, every $C_4$ involving $e$ will be either counted at
  Line~\ref{algc4edge:g2} or Line~\ref{algc4edge:g3}. Later, at
  Line~\ref{algc4edge:add}, these will get added together to count $\square(e)$.
\end{proof}
  
\begin{proposition}
  \label{prop:c4edge_time}
  Algorithm~\ref{alg:c4edge} runs in $O(m \bar \delta)$ time.
\end{proposition}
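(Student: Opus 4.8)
The plan is to split the runtime into two pieces: the main double-pass block at Lines~\ref{algc4edge:main_start}--\ref{algc4edge:main_end}, and the final pass over vertices and edges, and to bound each separately by $O(m\bar\delta(G))$.

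For the main block, I would observe that its loop structure is \emph{identical} to that of Algorithm~\ref{alg:c4}: the outer loop ranges over $v \in V$, the middle loop over neighbors $u$ with $u \prec v$, and the inner loop over neighbors $y$ of $u$ with $y \prec v$. The only change is that the innermost body now reads and writes the 2-tuple $L(y)$ and updates the arrays $M$ (at fixed offsets $T(v)+i$ and $T(u)+j$, both of which are already available as loop variables) instead of a single counter; each such operation is still $O(1)$. Hence the accounting in the proof of Proposition~\ref{clm:arbor} applies verbatim, bounding this block by $\sum_{uv \in E,\, u \prec v} O(d(u)) = \sum_{uv \in E} O(\min\{d(u),d(v)\}) = O(m\bar\delta(G))$. (The array $T$ is assumed given as part of the input, so there is no extra cost to form it; even if it were computed, a prefix sum is $O(n)$.)

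For the final loop, the one nontrivial point is the scan at Line~\ref{algc4edge:find}, which walks $N(u)$ to locate the index $j$ with $v$ as the $j^{th}$ neighbor of $u$; this costs $O(d(u))$ per edge $vu$ examined, and naively one might fear it sums to something like $\sum_u d(u)^2$. The observation that rescues the bound is that the scan is only performed when $u \prec v$, and $u \prec v$ forces $d(u) \le d(v)$, so $d(u) = \min\{d(u),d(v)\}$. Every other step in this loop is $O(1)$. Therefore the total cost of the final loop is $\sum_{v \in V} \sum_{u \in N(v),\, u \prec v} O(d(u)) = \sum_{uv \in E} O(\min\{d(u),d(v)\}) = O(m\bar\delta(G))$ by the definition of $\bar\delta(G)$. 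Adding the two pieces gives the claimed $O(m\bar\delta(G))$ bound.

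The only real obstacle I anticipate is charging the $O(d(u))$ neighbor scan correctly — namely, noticing that it is gated by the test $u \prec v$ and hence controlled by $\min\{d(u),d(v)\}$ rather than by $d(u)$ unconditionally; everything else is a direct reuse of the per-edge accounting already established for Proposition~\ref{clm:arbor}.
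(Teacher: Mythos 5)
Your proposal is correct and follows essentially the same route as the paper: the main double-pass block is bounded exactly as in Proposition~\ref{clm:arbor}, and the linear scan at Line~\ref{algc4edge:find} is charged $O(d(u))$ per edge with $u \prec v$, giving $\sum_{vu \in E,\, u \prec v} d(u) \leq \sum_{vu \in E} \min\{d(u),d(v)\} = m\bar\delta(G)$. The key observation you highlight --- that the scan is gated by $u \prec v$ and hence controlled by $\min\{d(u),d(v)\}$ --- is precisely the argument the paper uses.
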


\begin{proof}
  Clearly Lines~\ref{algc4edge:main_start}~--~\ref{algc4edge:main_end} take $O(m
  \bar \delta)$ time as in the previous algorithms. For the remaining steps, we
  implement Line~\ref{algc4edge:find} by a linear search over all neighbors of
  $u$ to find the index $j$. Summed over all edges, the complexity of these
  lines is given by
  \[
  \sum_{ vu \in E, u \prec v } d(u) \leq \sum_{vu \in E} \min\{d(v), d(u) \} =
  m\bar \delta.\qedhere
  \]
\end{proof}

Note that Algorithm~\ref{alg:c4edge} maintains the $M$ counter for all $2m$
bidirectional edges. By contrast, a hash-table approach would only need to
maintain $m$ counters (c.f.~Algorithm~\ref{alg:c4edge_hash}). The duplication of
edges allows us to avoid the cost of looking up edge indices, while the
hash-table data structure itself would require $O(m)$ additional memory to keep
the probing time in check.

In Appendix~\ref{app:hash} we give a simpler version of
Algorithm~\ref{alg:c4edge} based on hash tables, which may aid the exposition of
Algorithm~\ref{alg:c4edge}, but is significantly slower as demonstrated in
Section~\ref{sec:experiment}.

\section{\label{sec:opt} Saving constant factors in runtime via sorting}
In our algorithm descriptions, we have focused on algorithms which do not
require any rearrangement or sorting of the input data. If we are willing to
partially sort the adjacency lists, we can improve the runtime by some constant
factors.

For each vertex $v$, let us define $N^+(v) = \{ u \in N(v): u \succ v \}$ and
likewise $N^-(v) = \{u \in N(v):u \prec v \}$. We propose the following
preprocessing step for all our algorithms:

\begin{center}
  \defbox{Preprocessing Step:}{Sort the adjacency list $N^{+}(v)$ of each vertex
    $v$, such that the neighborhood begins with $N^{-}(v)$ (in arbitrary order),
    and then is followed by $N^{+}(v)$ (sorted in order of $\prec$).}
\end{center}

Afterwards, many other steps in the algorithm can be implemented more
efficiently. To illustrate, consider the basic 4-cycle counting algorithm,
listed below as Algorithm~\ref{alg:c4sort}.

\begin{algorithm}[H]
  \caption{4-cycle counting, with preprocessing}\label{alg:c4sort}
  \begin{algorithmic}[1]
    \Require zero-initialized array $L$ of size $n$ indexed by $V$
    \State Preprocess the adjacency lists of all vertices.
    \For{$v\in V$}
      \For{$u \in N^{-}(v)$}
      \label{algc4sort:u_begin}
        \State set $y$ to be the first neighbor of $u$
        (in its ordered adjacency list)
        \While{$y \neq v$} 
          \State set $\square(G)\coloneqq \square(G) + L(y)$
          \State set $L(y) \coloneqq L(y) + 1$
          \State set $y$ to be the next neighbor of $u$
        \EndWhile
      \EndFor
      \label{algc4sort:u_end}
      \For{$u \in N^{-}(v)$}
        \State set $y$ to be the first neighbor of $u$
        \While{$y \neq v$} 
          \State set $L(y) \coloneqq 0$
          \State set $y$ to be the next neighbor of $u$
        \EndWhile
      \EndFor
    \EndFor
  \end{algorithmic}
\end{algorithm}

Compare Lines~\ref{algc4sort:u_begin}~--~\ref{algc4sort:u_end} of
Algorithm~\ref{alg:c4sort} to the corresponding steps in Algorithm~\ref{alg:c4}:
instead of looping over all neighbors $y \in N(u)$, and then checking explicitly
if $y \prec v$, we only loop over the vertices $y \in N(u)$ with $y \prec v$
without actually checking the $\prec$ condition. Because of the sorted order of
the neighborhood $N(u)$, the vertices $y \in N^{-}(u)$ will come at the
beginning; since $u \prec v$, these will automatically satisfy $y \prec
v$. Then, the vertices $y \in N^{+}(u)$ come in sorted order, where we observe
that $v \in N^{+}(u)$.

Thus, these loops are faster by constant factors. We also observe that the
preprocessing step itself should be asymptotically faster than the other steps
of the algorithm. See Appendix~\ref{sec:arraysort_wallclock} for actual
wallclock running time.

\begin{proposition}
  The preprocessing step can be implemented in $O(m \log \bar \delta)$ time and
  $O(n)$ space.
\end{proposition}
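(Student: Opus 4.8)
The plan is to show that the sorting work is dominated by $O(m \log \bar\delta(G))$ and the extra space is $O(n)$. The only nontrivial content is the sort: we must arrange each adjacency list $N(v)$ so that $N^-(v)$ comes first (in any order) and then $N^+(v)$ follows in $\prec$-order. Splitting $N(v)$ into $N^-(v)$ and $N^+(v)$ costs only $O(d(v))$ per vertex using a single in-place partition pass (testing $u \prec v$ in $O(1)$ by comparing degrees), so this part is $O(m)$ overall and uses $O(1)$ extra space. Thus the entire cost reduces to sorting $N^+(v)$ in $\prec$-order for each $v$.

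First I would observe the key size bound: $|N^+(v)| \le \bar\delta(G)$ is \emph{not} quite what we want, but rather we should sum the sorting costs. Sorting a list of length $\ell$ takes $O(\ell \log \ell)$ comparisons, so the total is $\sum_{v} O\big(|N^+(v)| \log |N^+(v)|\big)$. For each edge $uv$ with $u \prec v$, the vertex $u$ contributes to $N^+(v)$ but also — crucially — $d(u) \le d(v)$, and $|N^+(v)|$ is bounded by... here one must be careful. The cleanest route: for an edge $uv$ with $u \prec v$, the endpoint $u$ lies in $N^+$ of $v$, and $|N^+(v)| \le d(v)$. That gives $\sum_v |N^+(v)|\log|N^+(v)| \le \sum_v d(v)\log d(v)$, which is \emph{not} $O(m\log\bar\delta)$ in general. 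So instead I would charge the sort of $N^+(v)$ against the edges $uv$ with $u \prec v$ counted \emph{from $u$'s side}: each such edge appears in exactly one list $N^+(v)$, and the number of edges $uv$ with $u \prec v$ incident to a fixed $v$ from below is $|N^-(v)|$, not $|N^+(v)|$. The right accounting is: $\sum_{uv \in E,\, u \prec v} \log d(v) \le \sum_{uv \in E,\, u\prec v}\log\big(\text{something} \le \bar\delta\big)$ — this does not immediately work either, since $d(v)$ can be large.

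The fix, which I expect to be the main technical point, is to sort $N^+(v)$ \emph{not} by a comparison sort but by exploiting that we only ever traverse $N^+(v)$ as a sorted sublist and can afford a sort whose cost is $O(|N^+(v)| \log |N^-(v)|)$-equivalent, or alternatively to note that in Algorithm~\ref{alg:c4sort} the list $N^+(u)$ is scanned only for $u \in N^-(v)$, i.e. only when $u \prec v$, so we need $N^+(u)$ sorted and its traversal cost is $\sum_{uv \in E,\, u \prec v} (\text{position of }v\text{ in }N^+(u)) \le \sum_{uv\in E,\,u\prec v} d(u) = m\bar\delta(G)$ as in Proposition~\ref{clm:arbor}. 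For the \emph{preprocessing cost} itself, I would sort each $N^+(v)$ with a comparison sort but bound $|N^+(v)|$ more carefully: since every $u \in N^+(v)$ has $d(u) \ge d(v)$, and $\sum_{u \in N^+(v)} 1 \le \sum_{u \in N(v)} 1 = d(v)$, we get the per-vertex cost $O(d(v)\log d(v))$; then I would use the standard inequality that, summed over all $v$, $\sum_v d(v) \log d(v)$ can be re-grouped over edges as $\sum_{uv\in E}\big(\log d(u) + \log d(v)\big) = \sum_{uv \in E} \log(d(u)d(v)) \le \sum_{uv\in E} 2\log\max\{d(u),d(v)\}$, and then observe $\log \max \le \log(d(u)+d(v))$ which still is not obviously $O(\log\bar\delta)$. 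So the genuinely correct argument must instead sort $N^+(v)$ in time $O(|N^+(v)|\log |N^+(v)|)$ and bound $\sum_v |N^+(v)| \log|N^+(v)|$ via the fact that $\sum_v |N^+(v)| = m$ and, by concavity of $x \mapsto x\log x$ being superadditive the wrong way, one instead bounds each $|N^+(v)| \le \sqrt{2m} \le$ (irrelevant). The honest conclusion: the bound $O(m\log\bar\delta(G))$ follows because we should \emph{only sort $N^+(v)$ for vertices whose lists are actually traversed}, replacing comparison sort by noting $|N^-(u)|$-type bounds; I would present it as: the sort of list $N^+(u)$ is charged $O(d(u)\log d(u))$, and $\sum_u d(u)\log d(u) \le \log(\max_u d(u)) \cdot \sum_u d(u)$ is too weak, so instead use that for the \emph{relevant} traversals $\sum_{uv\in E, u\prec v}\log d(u) \le m \log\bar\delta(G)$ by Jensen applied to the concave function $\log$ over the average $\frac{1}{m}\sum_{uv\in E, u \prec v} d(u) \le \bar\delta(G)$. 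This last inequality — Jensen's inequality on $\log$ against the defining average of $\bar\delta(G)$ — is the crux, and with it the total preprocessing time is $m \log\big(\tfrac1m\sum_{uv\in E}\min\{d(u),d(v)\}\big) = O(m\log\bar\delta(G))$; the $O(n)$ space is immediate since each list is sorted in place with $O(1)$ scratch plus the partition step, reusing the array $L$ if a counting-sort variant is desired.
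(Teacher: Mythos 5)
Your endgame coincides with the paper's — Jensen's inequality applied to the concave function $\log$ against the average that defines $\bar\delta(G)$ — but the bridge from the actual sorting cost to the edge sum $\sum_{uv\in E,\,u\prec v}\log d(u)$ is never established, and that bridge is the substance of the proof. The total comparison-sort cost is $\sum_{v} d^+(v)\log d^+(v)$ with $d^+(v)=|N^+(v)|$, and the needed step is to distribute it over edges: $\sum_v d^+(v)\log d^+(v)=\sum_{vu\in E:\,v\prec u}\log d^+(v)\le\sum_{vu\in E:\,v\prec u}\log d(v)=\sum_{vu\in E}\log\min\{d(v),d(u)\}$. The crucial point is that each edge is charged the log of the length of the list it sits in, and that list belongs to the \emph{lower}-degree endpoint, so the logarithm is of (at most) the minimum degree; only then does Jensen give $m\log\bar\delta(G)$. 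Your proposal instead charges the sort of $N^+(u)$ at $O(d(u)\log d(u))$ per vertex, correctly observes that $\sum_u d(u)\log d(u)$ is too weak, and then simply asserts the per-edge inequality $\sum_{uv\in E,\,u\prec v}\log d(u)\le m\log\bar\delta(G)$ as if it bounded the sorting cost, without showing that it does.

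Along the way there are also concrete errors and red herrings: for an edge $uv$ with $u\prec v$, the endpoint $u$ lies in $N^-(v)$ and $v$ lies in $N^+(u)$, not ``$u$ lies in $N^+$ of $v$'' as you state twice; and the detours about replacing comparison sort, or sorting only the lists that get traversed, are unnecessary — a plain comparison sort of every $N^+(v)$ suffices once the per-edge charging above is done. With that charging inserted, your Jensen step $\frac1m\sum_{uv\in E}\log\min\{d(u),d(v)\}\le\log\bigl(\frac1m\sum_{uv\in E}\min\{d(u),d(v)\}\bigr)=\log\bar\delta(G)$ finishes the argument exactly as in the paper, and the $O(n)$ space claim is fine.
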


\begin{proof}
  For each vertex $v$, we sort the neighborhood $N^{+}(v)$. If we write $d^+(x)
  = |N^{+}(x)|$ for brevity, then this takes $O(d^{+}(v)\log d^{+}(v))$ time and
  $O(n)$ space. Summing over all vertices, the runtime is given by

  \begin{align*}
    \sum_{v \in V} d^{+}(v) \log d^{+}(v) &= \sum_{v} \log d^{+}(v) \sum_{u \in
      N^{+}(v)} 1 = \sum_{vu \in E: v \prec u} \log d^{+}(v) \\
    &\leq \sum_{vu \in E: v \prec u} \log d(v) \leq \sum_{vu \in E} \log
    \min\{d(v), d(u) \}.
  \end{align*}

  At this point, we can apply Jensen's inequality to the concave-down function
  $x \mapsto \log x$. We have
  \[
  \frac{1}{m} \sum_{vu \in E} \log \min \{ d(v), d(u) \} \leq \log \Bigl(
  \frac{1}{m} \sum_{vu \in E} \min\{ d(v), d(u) \} \Bigr) = \log \bar \delta.
  \qedhere
\]
\end{proof}

A similar preprocessing step can be used in Algorithm~\ref{alg:c4vertex} and
Algorithm~\ref{alg:c4edge}. For the latter algorithm, we can save even more time
by using binary search to find the index $j$ at Line~\ref{algc4edge:find},
instead of linear search.

\section{\label{sec:experiment}Experiments: array vs hash table}
We ran experiments to investigate the running time of our algorithms using
arrays as opposed to using hash tables. In all tests, our array-based
implementations were at least $2\times$ faster, and on average $4\times$ -- $7
\times$ faster, than the corresponding hash-based implementation. We benchmarked
the runtimes on both synthetic grid graphs and real-world graphs. The graph
sizes ranged from about half a million to over 250 million vertices, with the
largest graph having over 3 billion edges and nearly 500 billion 4-cycles. The
graphs were simple, undirected and unweighted graphs with vertices labeled from
$0, \dots, n-1$ without gaps. The count of edges reported in our experiments
includes both directions of an edge.

We used hash table implementations available from the C++ standard library and
the Boost library~\cite[version 1.82]{bib:boost}; specifically,
\emph{std::unordered\_map} from C++, and from Boost the
\emph{boost::unordered\_map} and \emph{boost::unordered\_flat\_map}. The
standard hash table in C++ and Boost, both called ``unordered map'', use a
chaining strategy (linked lists) for collisions. The Boost ``unordered flat
map'' hash table uses open-addressing (probing) and is considered a very fast
hash table implementation. See also \cite{bib:hashbenchmark} for some more
extensive benchmarks on various hash table implementations.

We ran the experiments on a workstation with 28 Intel Xeon E5-2680 cores and 256
GB of RAM. The algorithm implementations were written in C++ and compiled with
``-std=C++11'' and ``-O3'' options enabled. The integer type for vertex labels
were \emph{32-bit unsigned int} (uint32).

All wallclock running times are given in
Appendix~\ref{app:runtime}. Table~\ref{tbl:avg_speedup} summarizes the average
speedup of our array method compared to using hash tables.

\begin{table}[H]
  \tablefontsize
  \centering
  \caption{Average speed-up using arrays versus hash tables over Grid graphs
    (grid) and Real-world graphs (real) from Table~\ref{tbl:graphdata} for
    4-cycle counts in total ($\square(G)$) and per vertex ($\square(v)$) and
    edge ($\square(e)$).}\label{tbl:avg_speedup}
  \begin{tabular}{lccccc}
    \toprule
    \multirow{2}{*}{Hash table} & \multirow{2}{*}{}
    & \multicolumn{3}{c}{Array speedup}
    & \multirow{2}{*}{\makecell{Overall Average\\ Speedup}} \\
    \cmidrule{3-5}
    && $\square(G)$ & $\square(v)$ & $\square(e)$ & {} \\
    \midrule
    \multirow{2}{*}{Boost flat}
    & (grid) & $4.4\times$ & $4.1\times$ & $5.6\times$
    & \multirow{2}{*}{$3.7\times$} \\
    \cmidrule{3-5}
    & (real) & $2.4\times$ & $2.6\times$ & $3.3\times$ \\
    \midrule
    \multirow{2}{*}{Boost}
    & (grid) & $6.4\times$ & $6.4\times$ & $5.7\times$
    & \multirow{2}{*}{$5.5\times$} \\
    \cmidrule{3-5}
    & (real) & $4.6\times$ & $4.9\times$ & $5.2\times$ \\
    \midrule
    \multirow{2}{*}{C++}
    & (grid) & $7.5\times$ & $8.1\times$ & $6.8\times$
    & \multirow{2}{*}{$7.3\times$} \\
    \cmidrule{3-5}
    & (real) & $6.6\times$ & $6.7\times$ & $8.0\times$ \\
    \bottomrule
  \end{tabular}
\end{table}

The graphs used in the benchmarks are tabulated in
Table~\ref{tbl:graphdata}. Note that the average degeneracy $\bar\delta$ is
significantly smaller than the core number $\kappa$ on the real-world
graphs. The first four are grid graphs we constructed. The remaining graphs are
real-world datasets from the Stanford Network Analysis Project
(SNAP)~\cite{bib:snapnets}.

\begin{table}[H]
\caption{Test Graphs}\label{tbl:graphdata} 
\tablefontsize
\centering
\begin{tabular}{lrrrrrr}
  \toprule
  Graph & $n$ (vertices) & $m$ (edges) & max degree
  & $\bar\delta$ & $\kappa$ & $\#C_4$ \\
  \midrule
  grid-$2^{18}\times 2^7$ & 33,554,432 & 133,693,184 & 4 & 3.98 & 2
  & 33,292,161 \\
  grid-$2^{19}\times 2^7$ & 67,108,864 & 267,386,624 & 4 & 3.98 & 2
  & 66,584,449 \\
  grid-$2^{20}\times 2^7$ & 134,217,728 & 534,773,504 & 4 & 3.98 & 2
  & 133,169,025 \\
  grid-$2^{21}\times 2^7$ & 268,435,456 & 1,069,547,264 & 4 & 3.98 & 2
  & 266,338,177 \\
  \midrule
  web-BerkStan & 685,230 & 13,298,940 & 84,230 & 38.6 & 201
  & 127,118,333,411 \\
  com-Youtube & 1,134,890 & 5,975,248 & 28,754 & 27.3 & 51
  & 468,774,021 \\
  as-Skitter & 1,696,415 & 22,190,596 & 35,455 & 36.5 & 111
  & 62,769,198,018 \\
  com-LiveJournal & 3,997,962 & 69,362,378 & 14,815 & 54.0 & 360
  & 26,382,794,168 \\
  com-Orkut & 3,072,441 & 234,370,166 & 33,313 & 143.2 & 253
  & 127,533,170,575 \\
  com-Friendster & 65,608,366 & 3,612,134,270 & 5,214 & 204.1 & 304
  & 465,803,364,346 \\
  \bottomrule
\end{tabular}
\end{table}

As the name suggests, the grid graphs are grids of $R$ rows by $C$ columns of
vertices linked together. Thus, a $3\times 3$ grid has a total of $9$ vertices
and $4$ copies of $C_4$. These graph have a fixed max degree of $4$, which
serves as a useful validation. We chose $R=2^j$ (with $j=18,19,20,21$) and
$C=2^7$ for the number of rows and columns, respectively. The number of
vertices, edges, and 4-cycles are given by the formulas:
\begin{align*}
& RC \tag{vertex count} \\
& (R-1)C+R(C-1) \tag{edge count} \\
& (R-1)(C-1) \tag{4-cycle count}
\end{align*}

Figure~\ref{fig:c4_compare} compares Algorithm~\ref{alg:c4} to the hash table
version given by Algorithm~\ref{alg:c4hash}. We emphasize
Algorithm~\ref{alg:c4hash} requires only a single pass over the graph data, as
opposed to the two passes for Algorithm~\ref{alg:c4}. Despite this, the
array-based method of Algorithm~\ref{alg:c4} outperforms
Algorithm~\ref{alg:c4hash} across all of the graphs on each of the library hash
tables in our study. A log-plot of the wallclock running time in seconds on grid
and real-world graphs is illustrated Figure~\ref{fig:c4_grid} and
Figure~\ref{fig:c4_snap}, respectively. Let $T_{\text{hash}}$ and
$T_{\text{array}}$ respectively denote the running time for hash table and
array-based implementations. If the array method is faster then
$T_{\text{hash}}/T_{\text{array}}$ is greater than one. These ratios are plotted
in Figures~\ref{fig:c4_ratio_grid},~\ref{fig:c4_ratio_snap}. It is clear that
Algorithm~\ref{alg:c4} using the array is faster on all counts than using a hash
table.

\begin{figure}[H]
  \centering
  \begin{subfigure}[t]{0.5\linewidth}
    \centering
    \resizebox{\linewidth}{!}{
      \includegraphics{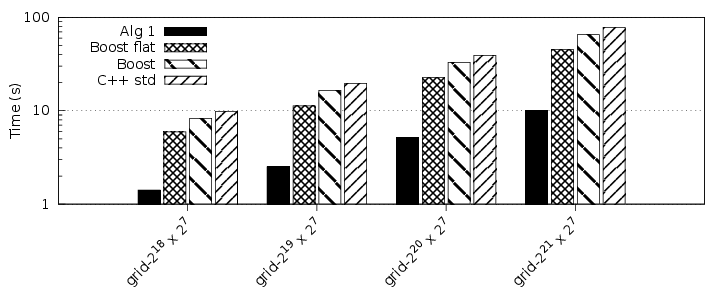}
    }
    \subcaption{Log-plot runtime on grid graphs.}
    \label{fig:c4_grid}
  \end{subfigure}%
  \begin{subfigure}[t]{0.5\linewidth}
    \centering
    \resizebox{\linewidth}{!}{
      \includegraphics{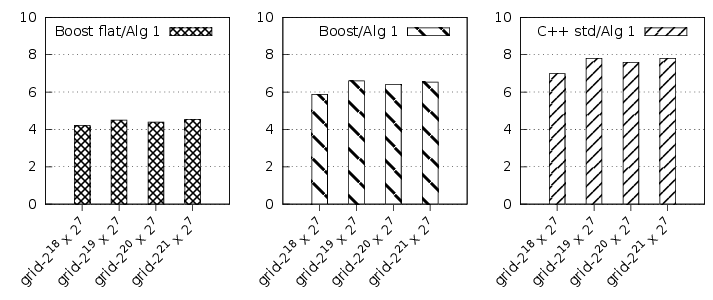}
    }
    \subcaption{Runtime ratio
      ($T_{\text{hash}}/T_{\text{array}}$) on grid graphs.}
    \label{fig:c4_ratio_grid}
  \end{subfigure}%
  \newline
  \begin{subfigure}[t]{0.5\linewidth}
    \centering
    \resizebox{\linewidth}{!}{
      \includegraphics{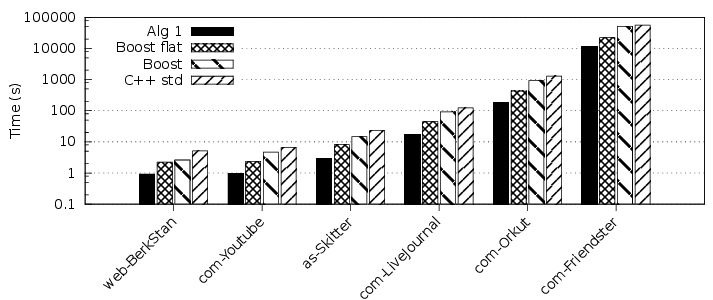}
    }
    \subcaption{Log-plot runtime on real-world graphs.}
    \label{fig:c4_snap} 
  \end{subfigure}%
  \begin{subfigure}[t]{0.5\linewidth}
    \centering
    \resizebox{\linewidth}{!}{
      \includegraphics{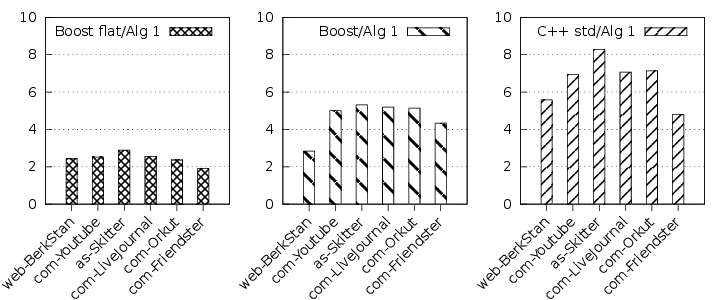}
    }
    \subcaption{Runtime ratio
      ($T_{\text{hash}}/T_{\text{array}}$) on real-world graphs.}
    \label{fig:c4_ratio_snap}
  \end{subfigure}
  \caption{Array (Algorithm~\ref{alg:c4}) versus hash table
    (Algorithm~\ref{alg:c4hash}) for counting
    $\square(G)$.}\label{fig:c4_compare}
\end{figure}

Next we show in Figure~\ref{fig:c4vertex_compare} that our array-based
Algorithm~\ref{alg:c4vertex} for computing 4-cycle counts on each vertex also
outperforms the corresponding hash table implementation given in
Algorithm~\ref{alg:c4vertex_hash}. Unlike our hash-based
Algorithm~\ref{alg:c4hash}, we cannot skip the second pass over the graph in
Algorithm~\ref{alg:c4vertex_hash}. We remind the reader that the current
methods~\cite{bib:sanei2018, bib:wang2019, bib:zhou2021, bib:wang2022} also take
more than one pass. For Algorithm~\ref{alg:c4vertex} we implemented the array
$L$ to hold pairs of integers in order to double buffer the counts of $y$ in the
degree-oriented path $(v,u,y)$. This is more cache-friendly than allocating two
separate arrays. We also tested using pairs holding an integer and Boolean,
where the Boolean value was used to flag when to reset the counts in the first
pass. But we did not observe significant performance gain despite this approach
being more space-efficient.

\begin{figure}[H]
  \centering
  \begin{subfigure}[t]{0.5\linewidth}
    \centering
    \resizebox{\linewidth}{!}{
      \includegraphics{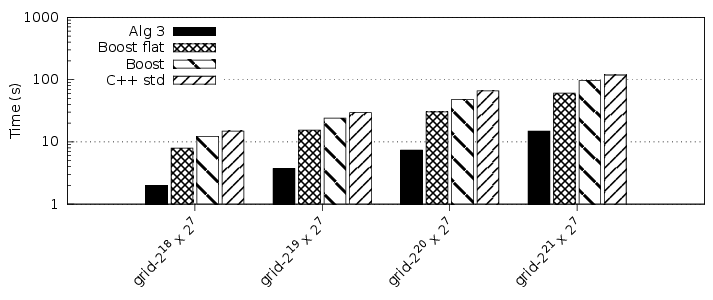}
    }
    \subcaption{Log-plot runtime on grid graphs.}
    \label{fig:c4vertex_grid}
  \end{subfigure}%
  \begin{subfigure}[t]{0.5\linewidth}
    \centering
    \resizebox{\linewidth}{!}{
      \includegraphics{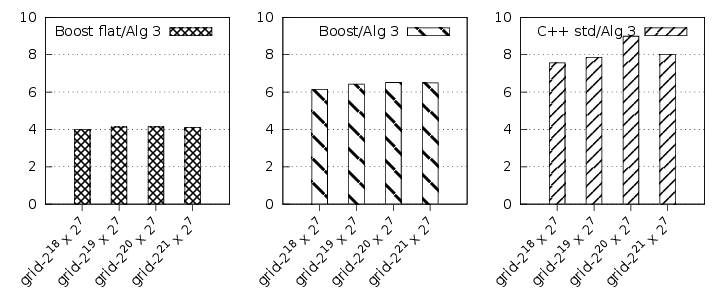}
    }
    \subcaption{Runtime ratio
      ($T_{\text{hash}}/T_{\text{array}}$) on grid graphs.}
    \label{fig:c4vertex_ratio_grid}
  \end{subfigure}%
  \newline
  \begin{subfigure}[t]{0.5\linewidth}
    \centering
    \resizebox{\linewidth}{!}{
      \includegraphics{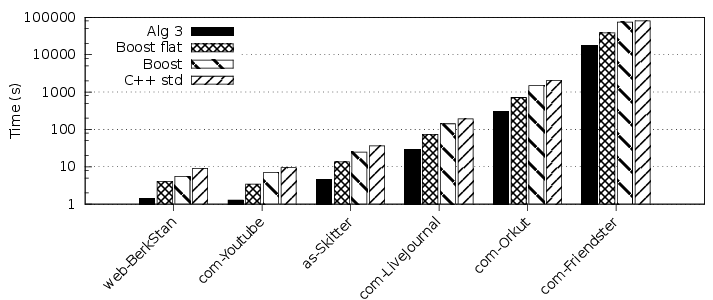}
    }
    \subcaption{Log-plot runtime on real-world graphs.}
    \label{fig:c4vertex_snap} 
  \end{subfigure}%
  \begin{subfigure}[t]{0.5\linewidth}
    \centering
    \resizebox{\linewidth}{!}{
      \includegraphics{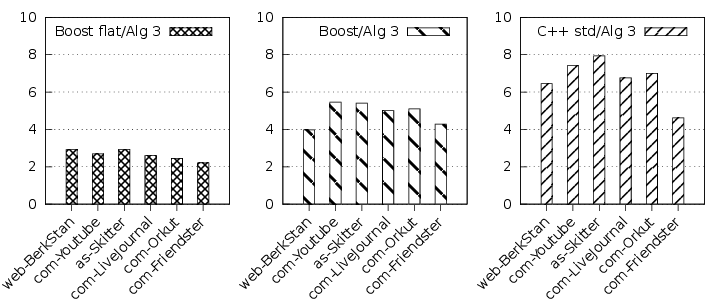}
    }
    \subcaption{Runtime ratio
      ($T_{\text{hash}}/T_{\text{array}}$) on real-world graphs.}
    \label{fig:c4vertex_ratio_snap}
  \end{subfigure}
  \caption{Array (Algorithm~\ref{alg:c4vertex}) versus hash table
    (Algorithm~\ref{alg:c4vertex_hash}) for counting $\square(v)$.}
  \label{fig:c4vertex_compare}
\end{figure}

Finally, Figure~\ref{fig:c4edge_compare} gives a comparison of runtime
performance for edge-local 4-cycle counting, showing that our array-based
Algorithm~\ref{alg:c4edge} is faster than the corresponding hash table
implementation given in Algorithm~\ref{alg:c4edge_hash}. The timing for
Algorithm~\ref{alg:c4edge} includes the set-up time to construct the prefix-sum
array $T$. The hash table on edges uses a 64-bit integer key, and for each edge
$\{u, v \}$ we stored the endpoint that sorted first in the higher-order 32 bits
of the key and the other endpoint in the lower remaining bits.

\begin{figure}[H]
  \centering
  \begin{subfigure}[t]{0.5\linewidth}
    \centering
    \resizebox{\linewidth}{!}{
      \includegraphics{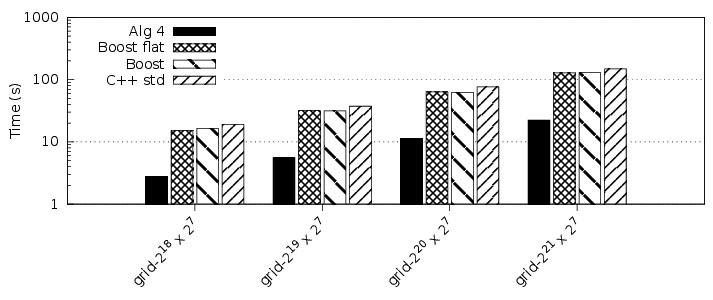}
    }
    \subcaption{Log-plot runtime on grid graphs.}
    \label{fig:c4edge_grid}
  \end{subfigure}%
  \begin{subfigure}[t]{0.5\linewidth}
    \centering
    \resizebox{\linewidth}{!}{
      \includegraphics{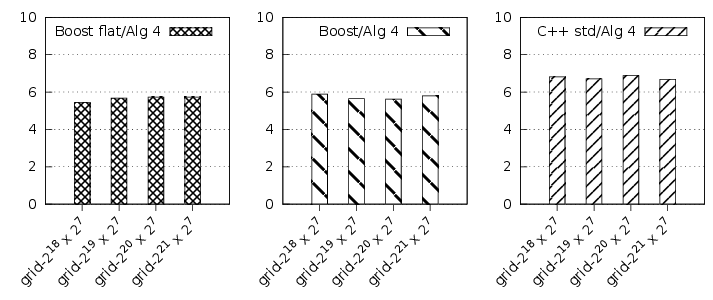}
    }
    \subcaption{Runtime ratio
      ($T_{\text{hash}}/T_{\text{array}}$) on grid graphs.}
    \label{fig:c4edge_ratio_grid}
  \end{subfigure}%
  \newline
  \begin{subfigure}[t]{0.5\linewidth}
    \centering
    \resizebox{\linewidth}{!}{
      \includegraphics{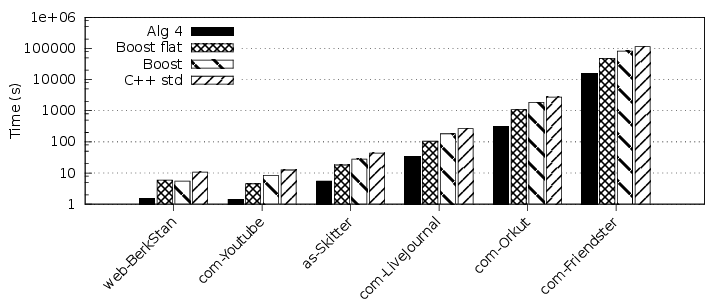}
    }
    \subcaption{Log-plot runtime on real-world graphs.}
    \label{fig:c4edge_snap} 
  \end{subfigure}%
  \begin{subfigure}[t]{0.5\linewidth}
    \centering
    \resizebox{\linewidth}{!}{
      \includegraphics{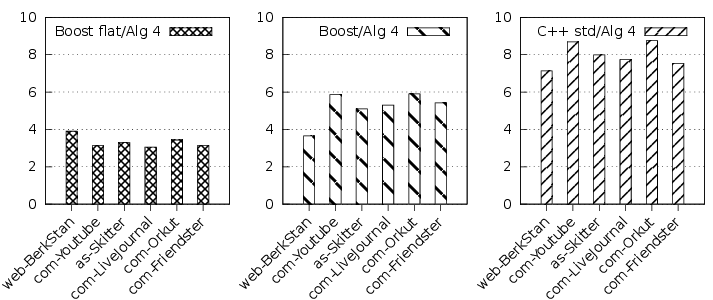}
    }
    \subcaption{Runtime ratio
      ($T_{\text{hash}}/T_{\text{array}}$) on real-world graphs.}
    \label{fig:c4edge_ratio_snap}
  \end{subfigure}
  \caption{Array (Algorithm~\ref{alg:c4edge})
    versus hash table (Algorithm~\ref{alg:c4edge_hash}) for counting
    $\square(e)$.}\label{fig:c4edge_compare}
\end{figure}

These empirical results demonstrate that our array-based algorithms are not only
very simple but also faster than hash table methods. The practical runtime can
be further improved using our Algorithm~\ref{alg:c4sort} if a pre-step sorting
pass is permitted. For the graphs in this study Algorithm~\ref{alg:c4sort} is on
average $1.7\times$ faster than our primary Algorithm~\ref{alg:c4}. The results
are illustrated in Figure~\ref{fig:c4sort_compare} using only the algorithm
wallclock time. The actual timings including the time for the pre-step sorting
pass are reported in Appendix~\ref{sec:arraysort_wallclock}. We used the sorting
algorithm from the standard C++ library and sort only the higher-degree
neighbors in $N^{+}(v)$.

\begin{figure}[H]
  \centering
  \begin{subfigure}[t]{0.5\linewidth}
    \centering
    \resizebox{\linewidth}{!}{
      \includegraphics{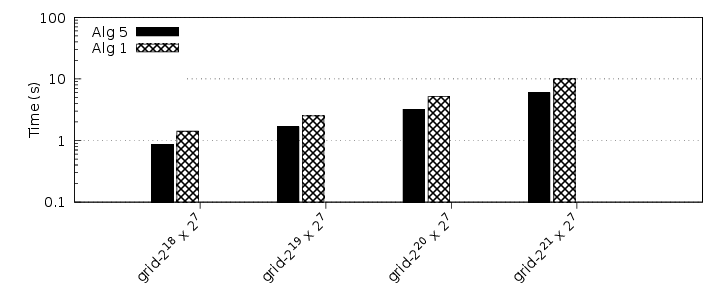}
    }
    \subcaption{Log-plot runtime on grid graphs.}
    \label{fig:c4sort_grid}
  \end{subfigure}%
  \begin{subfigure}[t]{0.5\linewidth}
    \centering
    \resizebox{0.5\linewidth}{!}{
      \includegraphics{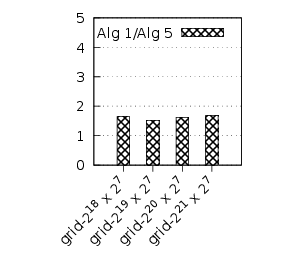}
    }
    \subcaption{Runtime ratio
      ($T_{\text{hash}}/T_{\text{array}}$) on grid graphs.}
    \label{fig:c4sort_ratio_grid}
  \end{subfigure}%
  \newline
  \begin{subfigure}[t]{0.5\linewidth}
    \centering
    \resizebox{\linewidth}{!}{
      \includegraphics{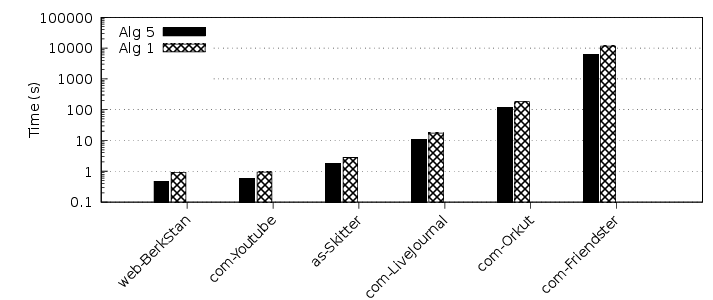}
    }
    \subcaption{Log-plot runtime on real-world graphs.}
    \label{fig:c4sort_snap} 
  \end{subfigure}%
  \begin{subfigure}[t]{0.5\linewidth}
    \centering
    \resizebox{0.5\linewidth}{!}{
      \includegraphics{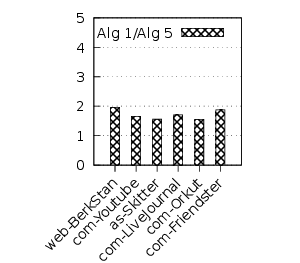}
    }
    \subcaption{Runtime ratio
      ($T_{\text{hash}}/T_{\text{array}}$) on real-world graphs.}
    \label{fig:c4sort_ratio_snap}
  \end{subfigure}
  \caption{Algorithm~\ref{alg:c4sort} (sort-based) versus
    Algorithm~\ref{alg:c4} for counting $\square(G)$.}
  \label{fig:c4sort_compare}
\end{figure}

We expect comparable speedup for our other algorithms
(Algorithm~\ref{alg:c4vertex} and Algorithm~\ref{alg:c4edge}). There are many
optimizations that could potentially improve the performance of either the array
or hash table based implementations. Our motivation here is to provide
supporting evidence that our simple array-based approach is an appealing
alternative to the current hash-based methods.

\bibliographystyle{abbrv}
\bibliography{fourcycle_counting}

\appendix
\section{\label{app:hash}Hash table derivatives of Algorithm~\ref{alg:c4}}
In our Algorithm~\ref{alg:c4}, the $L$ table is a flat array of integers in
order to ensure deterministic performance. In other approaches,
e.g. \cite{bib:wang2019, bib:wang2022}, a hash table is used instead. On one
hand, a hash table may use less memory on average (depending on the actual
number of vertices indexed); on the other hand, a hash table has higher overhead
due to collisions, etc.

There is a particularly clean variant of Algorithm~\ref{alg:c4} using hash
tables, which we describe as follows:

\begin{algorithm}[H]
  \caption{4-cycle counting with hash tables}\label{alg:c4hash}
  \begin{algorithmic}[1]
    \For{$v\in V$}
      \State Initialize a zero-filled hash-table $H$ indexed by $V$
      \For{$u\in N(v)$}
        \If{$u\prec v$}
          \For{$y\in N(u)$}
            \If{$y\prec v$}
              \State set $\square(G)\coloneqq \square(G) + H(y)$
              \State set $H(y) \coloneqq H(y) + 1$
            \EndIf
          \EndFor
        \EndIf
      \EndFor
    \EndFor
  \end{algorithmic}
\end{algorithm}

This requires just a single pass for each vertex $v$, in contrast to previous
algorithms~\cite{bib:sanei2018, bib:wang2019, bib:zhou2021, bib:wang2022};
instead of explicitly zeroing out the hash table $H$, we can simply discard it
and re-initialize for the next vertex $v$.

Along similar lines, replacing the $L$ array with a hash table in
Algorithm~\ref{alg:c4vertex} yields the following hash-based, vertex-local
4-cycle counting algorithm.

\begin{algorithm}[H]
  \caption{Hash-based Vertex-local 4-cycle counting}\label{alg:c4vertex_hash}
  \begin{algorithmic}[1]
    \Require zero-initialized array $\square$ of size $n$ indexed by $V$
    \For{$v\in V$}
      \State Initialize a zero-filled hash table $H$ indexed by $V$
      \For{$u\in N(v)$}
        \If{$u\prec v$}
          \For{$y\in N(u)$}
            \If{$y\prec v$}
              \State set $\square(v) := \square(v) + H(y)$
              \State set $\square(y) := \square(y) + H(y)$
              \State set $H(y) \coloneqq H(y) + 1$
            \EndIf
          \EndFor
        \EndIf
      \EndFor
      \For{$u\in N(v)$}
        \If{$u\prec v$}
          \For{$y\in N(u)$}
            \If{$y\prec v$}
              \State set $\square(u)\coloneqq \square(u) + H(y)-1$
            \EndIf
          \EndFor
        \EndIf
      \EndFor
    \EndFor
  \end{algorithmic}
\end{algorithm}

We can extend Algorithm~\ref{alg:c4vertex} to compute the 4-cycle counts on each
edge; see Algorithm~\ref{alg:c4edge_hash} for details. This does \emph{not}
directly give Theorem~\ref{thm:c4edge}, due to the use of a randomized data
structure. We emphasize that whenever the hash table on edges $H_e$ is accessed
we sort the edge endpoints to get the correct count for the unordered endpoint
pair $\{x,y\}$.

\begin{algorithm}[H]
  \caption{Hash-based Edge-local 4-cycle counting}\label{alg:c4edge_hash}
  \begin{algorithmic}[1]
    \Require zero-initialized hash table $H_v$ of size $n$
    \Require zero-initialized hash table $H_e$ of size $m$
    \For{$v\in V$}
      \For{$u\in N(v)$}
        \If{$u\prec v$}
          \For{$y\in N(u)$}
            \If{$y\prec v$}
              \State set $H_v(y) \coloneqq H_v(y) + 1$
            \EndIf
          \EndFor
        \EndIf
      \EndFor
      \For{$u\in N(v)$}
        \If{$u\prec v$}
          \For{$y\in N(u)$}
            \If{$y\prec v$}
              \State set $H_e(\{v,u\})\coloneqq H_e(\{v,u\}) + H_v(y)-1$
              \label{algc4edge_hash:1stform}
              \State set $H_e(\{u,y\})\coloneqq H_e(\{u,y\}) + H_v(y)-1$
              \label{algc4edge_hash:2ndform}
            \EndIf
          \EndFor
        \EndIf
      \EndFor
    \EndFor
  \end{algorithmic}
\end{algorithm}

In Algorithm~\ref{alg:c4edge_hash} we update $H_e(\{v,u\}),H_e(\{u,y\})$ in the
second iteration for similar reasons why we updated $\square(u)$ in the second
iteration of Algorithm~\ref{alg:c4vertex}. Namely, since $u$ is the intersecting
vertex in the $(v,u,y)$ path for a 4-cycle, then $u$ is in $H_v(y)-1$ cycles
involving the endpoints $v,y$ and hence the final count of $H_v(y)$ is needed.

\section{\label{app:runtime}Wallclock times}
\subsection{Algorithm~\ref{alg:c4} and hash-based Algorithm~\ref{alg:c4hash} for
  counting $\square(G)$}
\begin{table}[H]
  \tablefontsize
  \caption{Wallclock time in seconds for counting
    $\square(G)$.}\label{tbl:c4_wallclock}
  \sisetup{round-mode=places, round-precision=2, group-separator={,}}
  \centering
  \begin{tabular}{l *4{S[table-format=5.2]}}
    \toprule
    && \multicolumn{3}{c}{Algorithm~\ref{alg:c4hash}} \\
    \cmidrule{3-5}
    {} & {Algorithm~\ref{alg:c4}} & {Boost Flat} & {Boost} & {C++ std} \\
    \midrule
    grid-$2^{18}\times 2^7$ & 1.409898 & 5.946442 & 8.282919 & 9.846832 \\
    grid-$2^{19}\times 2^7$ & 2.507535 & 11.259669 & 16.530271 & 19.573684 \\
    grid-$2^{20}\times 2^7$ & 5.156348 & 22.614043 & 33.021225 & 39.107797 \\
    grid-$2^{21}\times 2^7$ & 10.001205 & 45.213457 & 65.358424 & 78.068043 \\
    \midrule
    web-BerkStan & 0.919170 & 2.232725 & 2.610056 & 5.127991 \\
    com-Youtube & 0.941304 & 2.371421 & 4.714385 & 6.532483 \\
    as-Skitter & 2.797216 & 8.092537 & 14.865716 & 23.177490 \\
    com-LiveJournal & 17.520293 & 44.698306 & 91.050951 & 123.690076 \\
    com-Orkut & 181.799631 & 430.047626 & 932.826652 & 1296.495744 \\
    com-Friendster & 11680.592756 & 22339.081823 & 50641.689680 & 56099.331270
    \\
    \bottomrule
  \end{tabular}
\end{table}

\subsection{Algorithm~\ref{alg:c4vertex} and hash-based
  Algorithm~\ref{alg:c4vertex_hash} for counting $\square(v)$}

\begin{table}[H]
  \tablefontsize
  \caption{Wallclock time in seconds for counting
    $\square(v)$.}\label{tbl:c4vertex_wallclock}
  \sisetup{round-mode=places, round-precision=2, group-separator={,}}
  \centering
  \begin{tabular}{l *4{S[table-format=5.2]}}
    \toprule
    && \multicolumn{3}{c}{Algorithm~\ref{alg:c4vertex_hash}} \\
    \cmidrule{3-5}
    {} & {Algorithm~\ref{alg:c4vertex}} & {Boost Flat} & {Boost} & {C++ std} \\
    \midrule
    grid-$2^{18}\times 2^7$ & 1.974360 & 7.899200 & 12.113139 & 14.927022 \\
    grid-$2^{19}\times 2^7$ & 3.735104 & 15.407606 & 23.980264 & 29.336168 \\
    grid-$2^{20}\times 2^7$ & 7.352238 & 30.347825 & 47.848993 & 66.066654 \\
    grid-$2^{21}\times 2^7$ & 14.787558 & 60.506046 & 95.912017 & 118.372449 \\
    \midrule
    web-BerkStan & 1.397283 & 4.076517 & 5.555774 & 9.020273 \\
    com-Youtube & 1.273454 & 3.426832 & 6.955773 & 9.448993 \\
    as-Skitter & 4.590072 & 13.479278 & 24.813039 & 36.421807 \\
    com-LiveJournal & 28.340886 & 73.624984 & 142.095220 & 191.508372 \\
    com-Orkut & 292.130614 & 713.071135 & 1490.447212 & 2044.297366 \\
    com-Friendster & 17370.654695 & 38427.779236 & 74335.696056 & 80277.221946 \\
    \bottomrule
  \end{tabular}
\end{table}

\subsection{Algorithm~\ref{alg:c4edge} and hash-based
  Algorithm~\ref{alg:c4edge_hash} for counting $\square(e)$}

\begin{table}[H]
  \tablefontsize
  \caption{Wallclock time in seconds for counting
    $\square(e)$.}\label{tbl:c4edge_wallclock}
  \sisetup{round-mode=places, round-precision=2, group-separator={,}}
  \centering
  \begin{tabular}{l *4{S[table-format=6.2]}}
    \toprule
    && \multicolumn{3}{c}{Algorithm~\ref{alg:c4edge_hash}} \\
    \cmidrule{3-5}
    {} & {Algorithm~\ref{alg:c4edge}} & {Boost Flat} & {Boost} & {C++ std} \\
    \midrule
    grid-$2^{18}\times 2^7$ & 2.796349 & 15.179904 & 16.464479 & 19.040180 \\
    grid-$2^{19}\times 2^7$ & 5.575090 & 31.646897 & 31.512673 & 37.383902 \\
    grid-$2^{20}\times 2^7$ & 11.079316 & 63.484692 & 62.230866 & 76.161981 \\
    grid-$2^{21}\times 2^7$ & 22.275985 & 127.706312 & 128.991308 & 148.667144 \\
    \midrule
    web-BerkStan & 1.501359 & 5.858236 & 5.482321 & 10.710947 \\
    com-Youtube & 1.440790 & 4.524086 & 8.451709 & 12.505601 \\
    as-Skitter & 5.494119 & 18.113570 & 28.014767 & 43.918403 \\
    com-LiveJournal & 34.315179 & 104.598719 & 181.842732 & 265.444243 \\
    com-Orkut & 314.220578 & 1078.084563 & 1852.628516 & 2751.603178 \\
    com-Friendster & 15211.269069 & 47820.847604 & 82507.836505 & 114516.718534
    \\
    \bottomrule
  \end{tabular}
\end{table}

\subsection{\label{sec:arraysort_wallclock}Sort-based Algorithm~\ref{alg:c4sort}
  and Algorithm~\ref{alg:c4} for counting $\square(G)$}

\begin{table}[H]
  \tablefontsize
  \caption{Wallclock time in seconds for counting
    $\square(G)$.}\label{tbl:c4sort_wallclock}
  \sisetup{round-mode=places, round-precision=2, group-separator={,}}
  \centering
  \begin{tabular}{l *3{S[table-format=5.2]}}
    \toprule
    & \multicolumn{2}{c}{Algorithm~\ref{alg:c4sort}} \\
    \cmidrule{2-3}
    {} & {pre-step sort} & {run} & {Algorithm~\ref{alg:c4}} \\
    \midrule
    grid-$2^{18}\times 2^7$ & 0.790889 & 0.856175 & 1.409898 \\
    grid-$2^{19}\times 2^7$ & 1.500563 & 1.653989 & 2.507535 \\
    grid-$2^{20}\times 2^7$ & 3.000943 & 3.184887 & 5.156348 \\
    grid-$2^{21}\times 2^7$ & 7.099967 & 5.922667 & 10.001205 \\
    \midrule
    web-BerkStan & 0.211910 & 0.469464 & 0.919170 \\
    com-Youtube & 0.134649 & 0.568863 & 0.941304 \\
    as-Skitter & 0.320545 & 1.798782 & 2.797216 \\
    com-LiveJournal & 1.436337 & 10.254448 & 17.520293 \\
    com-Orkut & 5.121030 & 115.621703 & 181.799631 \\
    com-Friendster & 105.689000 & 6026.564089 & 11680.592756 \\
    \bottomrule
  \end{tabular}
\end{table}

\end{document}